\newtheorem{definition}{Definition}
\newtheorem{claim}{Claim} 
\newtheorem{lemma}{Lemma} 
\newtheorem{theorem}{Theorem} 
\newtheorem{obs}{Observation} 
\newtheorem{proposition}{Proposition}
\newtheorem{remark}{Remark}
\newcommand{\ex}{\mbox{\small\rm X}}
\newcommand{\F}{\mathbb{F}}
\newcommand{\B}{\mathbb{B}}
\newcommand{\Q}{\mathbb{Q}}
\newcommand{\R}{\mathbb{R}}
\renewcommand{\angle}[1]{\mathopen{\langle} #1\mathclose{\rangle}}
\newcommand{\pit}{\mbox{\small\rm PIT}}
\DeclareMathOperator{\poly}{\mbox{\small\rm poly}}
\title{Efficient Identity Testing and Polynomial Factorization over Non-associative Free Rings}
\author{V. Arvind\thanks{Institute of Mathematical Sciences (HBNI), Chennai,
    India, \texttt{email: arvind@imsc.res.in}} \and Rajit Datta
  \thanks{Chennai Mathematical Institute, Chennai, India,
    \texttt{email: rajit@cmi.ac.in}} \and Partha Mukhopadhyay \thanks{Chennai Mathematical Institute, Chennai, India,
    \texttt{email: partham@cmi.ac.in}} \and S. Raja\thanks{Chennai Mathematical Institute, Chennai, India,
    \texttt{email: sraja@cmi.ac.in}}}
\begin{document}

\maketitle

\begin{abstract}
  In this paper we study arithmetic computations in the nonassociative, and
  noncommutative free polynomial ring $\F\{x_1,x_2,\ldots,x_n\}$. Prior to this
  work, nonassociative arithmetic computation was considered by Hrubes,
  Wigderson, and Yehudayoff \cite{HWY10b}, and they showed lower bounds and
  proved completeness results. We consider Polynomial Identity Testing (\pit)
  and polynomial factorization over $\F\{x_1,x_2,\ldots,x_n\}$ and show the
  following results.

\begin{enumerate}

\item Given an arithmetic circuit $C$ of size $s$ computing a polynomial
  $f\in\F\{x_1,x_2,\ldots,x_n\}$ of degree $d$, we give a deterministic
  $\poly(n,s,d)$ algorithm to decide if $f$ is identically zero polynomial or
  not. Our result is obtained by a suitable adaptation of the PIT algorithm of
  Raz-Shpilka\cite{RS05} for noncommutative ABPs.

\item Given an arithmetic circuit $C$ of size $s$ computing a
  polynomial $f\in\F\{x_1,x_2,\ldots,x_n\}$ of degree $d$, we give an
  efficient deterministic algorithm to compute circuits for the
  irreducible factors of $f$ in time $\poly(n,s,d)$ when $\F=\Q$. Over
  finite fields of characteristic $p$, our algorithm runs in time
  $\poly(n,s,d,p)$.
\end{enumerate}
\end{abstract}

\section{Introduction}\label{intro}

Noncommutative computation, introduced in complexity theory by Hyafil
\cite{Hya77} and Nisan \cite{N91}, is an important subfield of algebraic
complexity theory. The main algebraic structure of interest is the free
noncommutative ring $\F\angle{X}$ over a field $\F$, where
$X=\{x_1,x_2,\cdots,x_n\}$ is a set of free noncommuting variables. A central
problem is Polynomial Identity Testing which may be stated as follows:

Let $f\in \F \angle{X}$ be a polynomial represented by a noncommutative
arithmetic circuit $C$.  The circuit $C$ can either be given by a black box
(using which we can evaluate $C$ on matrices with entries from $\F$ or an
extension field), or the circuit may be explicitly given. The algorithmic
problem is to check if the polynomial computed by $C$ is identically zero.
We recall the formal definition of a noncommutative arithmetic circuit.

\begin{definition}\label{ckt}
  An \emph{arithmetic circuit} $C$ over a field $\F$ and
  indeterminates $X=\{x_1,x_2,\cdots,x_n\}$ is a directed acyclic
  graph (DAG) with each node of indegree zero labeled by a variable or
  a scalar constant from $\F$: the indegree $0$ nodes are the input
  nodes of the circuit. Each internal node of the DAG is of indegree
  two and is labeled by either a $+$ or a $\times$ (indicating that it
  is a plus gate or multiply gate, respectively). Furthermore, the two
  inputs to each $\times$ gate are designated as left and right inputs
  which prescribes the order of multiplication at that gate. A gate of
  $C$ is designated as \emph{output}. Each internal gate computes a
  polynomial (by adding or multiplying its input polynomials), where
  the polynomial computed at an input node is just its label. The
  \emph{polynomial computed} by the circuit is the polynomial computed
  at its output gate.
\end{definition}

When the multiplication operation of the circuit in
Definition~\ref{ckt} is \emph{noncommutative}, it is called a
\emph{noncommutative arithmetic circuit} and it computes a polynomial
in the free noncommutative ring $\F\angle{X}$.  Since cancellation of
terms is restricted by noncommutativity, intuitively it appears
noncommutative polynomial identity testing would be easier than
polynomial identity testing in the commutative case. This intuition is
supported by fact that there is a deterministic polynomial-time
white-box $\pit$ algorithm for noncommutative ABP \cite{RS05}. In the
commutative setting a deterministic polynomial-time PIT for ABPs would
be a major breakthrough.\footnote{The situation is similar even in the
  lower bound case where Nisan proved that noncommutative determinant
  or permanent polynomial would require exponential-size algebraic
  branching program \cite{N91}.}  However, there is little progress
towards obtaining an efficient deterministic $\pit$ for general
noncommutative arithmetic circuits. For example, the problem is open
even for noncommutative \emph{skew} circuits.

If \emph{associativity} is also dropped then it turns out that $\pit$
becomes easy, as we show in this work. More precisely, we consider the
free noncommutative and nonassociative ring of polynomials $\F\{X\}$,
$X=\{x_1,x_2,\ldots,x_n\}$, where a polynomial is an $\F$-linear
combination of monomials, and each monomial comes with a bracketing
order of multiplication. For example, in the nonassociative ring
$\F\{X\}$ the monomial $(x_1(x_2 x_1))$ is different from monomial
$((x_1 x_2)x_1)$, although in the associative ring $\F\angle{X}$ they
clearly coincide. 

When the multiplication operation is both noncommutative and
nonassociative, it is called a \emph{nonassociative noncommutative
 circuit} and it computes a polynomial in the free nonassociative
noncommutative ring $\F\{X\}$. Previously, the nonassociative
arithmetic model of computation was considered by Hrubes, Wigderson,
and Yehudayoff \cite{HWY10b}. They showed completeness and explicit
lower bound results for this model.  We show the following result
about $\pit$.

\begin{itemize}
\item  Let $f(x_1,x_2,\ldots,x_n)\in\F\{X\}$ be a degree $d$ polynomial
  given by an arithmetic circuit of size $s$. Then in deterministic
  $\poly(s,n,d)$ time we can test if $f$ is an identically zero
  polynomial in $\F\{X\}$.
\end{itemize}

\begin{remark}
  We note that our algorithm in the above result does not depend on
  the choice of the field $\F$. A recent result of Lagarde et
  al.\ \cite{LMP16} shows an exponential lower bound, and a
  deterministic polynomial-time $\pit$ algorithm over $\R$ for
  noncommutative circuits where all parse trees in the circuit are
  isomorphic. We also note that in \cite{AR16} an exponential lower
  bound is shown for set-multilinear arithmetic circuits with the
  additional semantic constraint that each monomial has a unique parse
  tree in the circuit (but different monomials can have different
  parse trees).
\end{remark}

Next, we consider polynomial factorization in the ring
$\F\{\ex\}$. Polynomial factorization is very well-studied in the
commutative ring $\F[X]$: Given an arithmetic circuit $C$ computing a
multivariate polynomial $f\in\F[\ex]$ of degree $d$, the problem is to
efficiently compute circuits for the irreducible factors of $f$. A
celebrated result of Kaltofen \cite{Kal89} solves the problem in
randomized $\poly(n,s,d)$ time. Whether there is a polynomial-time
deterministic algorithm is an outstanding open problem. Recently, it
is shown (for fields of small characteristic and characteristic zero)
that the complexity of deterministic polynomial factorization problem
and the $\pit$ problem are polynomially equivalent \cite{KSS15}. A
natural question is to determine the complexity of polynomial
factorization in the noncommutative ring $\F\angle{X}$. The free
noncommutative ring $\F\angle{X}$ is not even a \emph{unique
  factorization domain} \cite{Cohn}. However, unique factorization
holds for homogeneous polynomials in $\F\angle{X}$, and it is shown in
\cite{AJR15} that for homogeneous polynomials given by noncommutative
circuits, the unique factorization into irreducible factors can be
computed in randomized polynomial time (essentially, by reduction to
the noncommutative PIT problem).

In this paper, we note that the ring $\F\{X\}$ is a \emph{unique
  factorization domain}, and given a polynomial in $\F\{X\}$ by a
circuit, we show that circuits for all its irreducible factors can be
computed in deterministic polynomial time.

\begin{itemize}
\item  Let $f(x_1,x_2,\ldots,x_n)\in\F\{X\}$ be a degree $d$ polynomial
  given by an arithmetic circuit of size $s$.  Then if $\F=\Q$, in
  deterministic $\poly(s,n,d)$ time we can output the circuits for the
  irreducible factors of $f$. If $\F$ is a finite field such that
  $char(\F)=p$, we obtain a deterministic $\poly(s,n,d,p)$ time
  algorithm for computing circuits for the irreducible factors of $f$.
\end{itemize}

\subsection*{Outline of the proofs}

\begin{itemize}
\item \textbf{Identity Testing Result:} The main ideas for our
  algorithm are based on the white-box Raz-Shpilka PIT algorithm for
  noncommutative ABPs \cite{RS05}. As in the Raz-Shpilka algorithm
  \cite{RS05}, if the circuit computes a nonzero polynomial $f \in
  \F\{X\}$, then our algorithm output a \emph{certificate monomial}
  $m$ such that coefficient of $m$ in $f$ is nonzero.

% This perspective was also useful in the ABP interpolation
% algorithm \cite{AMS10}.

  We first sketch the main steps of the Raz-Shpilka algorithm. The
  Raz-Shpilka algorithm processes the input ABP (assumed homogeneous)
  layer by layer. Suppose layer $i$ of the ABP has $w$ nodes. The
  algorithm maintains a spanning set $\mathbb{B}_i$ of at most $w$
  many linearly independent $w$-dimensional vectors of monomial
  coefficients. More precisely, each vector $v_m\in\mathbb{B}_i$ is
  the vector of coefficients of monomial $m$ computed at each of the
  $w$ nodes in layer $i$. Furthermore, the coefficient vector at layer
  $i$ of any monomial is in the span of $\mathbb{B}_i$. The
  construction of $\mathbb{B}_{i+1}$ from $\mathbb{B}_i$ can be done
  efficiently.  Clearly the identity testing problem can be solved by
  checking if there is a nonzero vector in $\mathbb{B}_d$, where $d$
  is the total number of layers.

  Now we sketch our $\pit$ algorithm for polynomials over $\F\{\ex\}$
  given by circuits. Let $f$ be the input polynomial given by the
  circuit $C$.
  
  We encode monomials in the free nonassociative noncommutative ring
  $\F\{\ex\}$ as monomials in the free noncommutative ring
  $\F\angle{\ex,(,)}$, such that the encoding preserves the
  multiplication structure of $\F\{\ex\}$ (Observation \ref{obs1}).
  For $1\leq j\leq d$, we can efficiently find from $C$ a homogeneous
  circuit $C_j$ that computes the degree $j$ homogeneous part of $C$.
  Thus, it suffices to test if $C_j\equiv 0$ for each $j$. Hence, it
  suffices to consider the case when $f\in\F\{\ex\}$ is homogeneous
  and $C$ is a homogeneous circuit computing $f$.

  For $j\leq d$ let $G_j$ denote the set of degree $j$ gates of
  $C$. The algorithm maintains a set $\mathbb{B}_j$ of
  $|G_j|$-dimensional linearly independent vectors of monomial
  coefficients such that any degree $j$ monomial's coefficient vector
  is in the linear span of $\mathbb{B}_j$. Clearly, $|\mathbb{B}_j|\le
  |G_j|$. We compute $\mathbb{B}_{j+1}$ from the sets $\{\mathbb{B}_i
  : 1\leq i\leq j\}$. For each vector in $\mathbb{B}_j$ we also keep
  the corresponding monomial. In the nonassociative model a degree $d$
  monomial $m=(m_1 m_2)$ is generated in a \emph{unique way}. To check
  if the coefficient vector of $m$ is in the span of $\mathbb{B}_d$ it
  suffices to consider vectors in the spans of $\mathbb{B}_{d_1}$ and
  $\mathbb{B}_{d_2}$, where $d_1=\deg(m_1)$ and $d_2=\deg(m_2)$. This
  is a crucial difference from a general noncommutative circuit and
  using this property we can compute $\mathbb{B}_{j+1}$.
  
 \begin{comment}
 \begin{remark}
   We note that $+$ gates can add two different $\times$ gates with
   different left and right degrees. For example, it can add two
   $\times$ gates $u,v$ with children degree $(d_1,d_2)$ and
   $(d'_1,d'_2)$ respectively, where $d'_1+d'_2=d_1+d_2$ and $d_1 \neq
   d'_1$. Since $C$ is obtained by bracketing circuits over
   nonassociative free rings and such a bracketing preserves parse
   tree structure of each monomials, coefficient of monomials computed
   by these two $\times$ gates can not affect each other at any gate
   in $C$. This will be crucially used in the PIT algorithm. In other
   words, each monomial computed by $C$ has a unique parse tree (but
   number of parse trees in the circuit could be unbounded) and thus
   monomials computed at $u$ and $v$ are different.
 \end{remark}
 We note that these ideas do not work for general noncommutative
 circuits as each monomial can have many different parse tree
 structure in the circuit.
\end{comment}
 
 \item \textbf{Polynomial Factorization in $\F\{X\}$}
 
   For a polynomial $f\in\F\{\ex\}$, let $f_j$ denote the homogeneous
   degree $j$ part of $f$. For a monomial $m$, let $c_m(f)$ denote the
   coefficient of $m$ in $f$. We will use the $\pit$ algorithm as
   subroutine for the factoring algorithm. Arvind et al.\ \cite{AJR15}
   have shown that given a monomial $m$ and a homogeneous
   noncommutative circuit $C$, in deterministic polynomial time
   circuits for the formal left and right derivatives of $C$ with
   respect to $m$ can be efficiently computed. This result is another
   ingredient in our algorithm.

   We sketch the easy case, when the given polynomial $f$ of degree
   $d$ has no constant term. Applying our $\pit$ algorithm to the
   homogeneous circuit $C_d$ (computing $f_d$) we find a nonzero
   monomial $m=(m_1 ~m_2)$ of degree $d$ in $f_d$ along with its
   coefficient $c_m(f)$. Notice that for any nontrivial factorization
   $f=gh$, $m_1$ is a nonzero monomial in $g$ and $m_2$ is a nonzero
   monomial in $h$. Suppose $|m_1|=d_1$ and $|m_2|=d_2$. Then the left
   derivative of $C_d$ with respect to $m_1$ gives
   $c_{m_1}(g)~h_{d_2}$ and the right derivative of $C_d$ with respect
   to $m_2$ gives $c_{m_2}(h)~g_{d_1}$. We now use the circuits for
   these derivatives and the nonassociative structure, to find
   circuits for different homogeneous parts of $g$ and $h$. The
   details, including the general case when $f$ has a nonzero constant
   term, is in Section~\ref{sec:fact}.
\end{itemize}

\subsection*{Organization}
In Section \ref{prelim} we describe some useful properties of
nonassociative and noncommutative polynomials. In Section
\ref{sec:pit} we give the PIT algorithm for $\F\{X\}$. In Section
\ref{sec:fact} we describe the factorization algorithm for
$\F\{X\}$. Finally, we list some open problems in Section
\ref{conclusion}.

\section{Preliminaries}\label{prelim}

For an arithmetic circuit $C$, a \emph{parse tree} for a monomial $m$
is a multiplicative sub-circuit of $C$ rooted at the output gate
defined by the following process starting from the output gate:

\begin{itemize}
\item At each $+$ gate retain exactly one of its input gates.
\item At each $\times$ gate retain both its input gates.
\item Retain all inputs that are reached by this process.
\item The resulting subcircuit is multiplicative and computes a
  monomial $m$ (with some coefficient).
\end{itemize}

For arithmetic circuits $C$ computing polynomials in the free
nonassociative noncommutative ring $\F\{X\}$, the same definition for
the parse tree of a monomial applies. As explained in the
introduction, in this case each parse tree (generating some monomial)
comes with a bracketed structure for the multiplication. It is
convenient to consider a polynomial in $\F\{x_1,\ldots,x_n\}$ as an
element in the noncommutative ring $\F\angle{x_1,\ldots,x_n, (, )}$
where we introduce two auxiliary variables $($ and $)$ (for left and
right bracketing) to encode the parse tree structure of any
monomial. We illustrate the encoding by the following example.
 
% We first describe how to convert a relationless monomial to a
% noncommutative monomial preserving parse tree. We retain the
% variables $\{ x_1,x_2, \ldots ,x_n \}$ and add two new variables
% $(_{\ell}$ and $)_r$, we treat all of them as noncommuting
% variables. We encode the parse tree using
% brackets as shown in Figure 1.

Consider the monomial (which is essentially a binary tree with leaves
labeled by variables) in the nonassociative ring $\F\{x,y\}$ shown in
Figure \ref{fig1:sub1}. Its encoding as a bracketed string in the free
noncommutative ring $\F\angle{x,y,(,)}$ is $(( \ x \ y \ ) \ y \ )$
and its parse tree shown in Figure \ref{fig1:sub2}.

\begin{figure}[h]
\label{fig1}
\centering
\begin{subfigure}{.5\textwidth}
  \centering
  %\includegraphics[width=.4\linewidth]{image1}
  %\caption{A subfigure}
  \begin{tikzpicture}[scale=0.75]
\node[circle,draw](z){$\times$}
  child{node[circle,draw]{ $\times$}  child{node[]{$x$}} child{node[]{$y$}}}
  child{
    node[]{$y$}};
\end{tikzpicture}
\caption{A nonassociative and noncommutative monomial $x y y$}
  \label{fig1:sub1}
\end{subfigure}%
\begin{subfigure}{.5\textwidth}
  \centering
  %\includegraphics[width=.4\linewidth]{image1}
  %\caption{A subfigure}
 \begin{tikzpicture}[scale=0.75,sibling distance=3cm, 
level 2/.style={sibling distance =2cm}, 
level 3/.style={sibling distance =1.6cm},
level 4/.style={sibling distance =1cm},
level 5/.style={sibling distance =.8cm}]
\node[circle,draw](z){\tiny $\times $}
child{node[circle,draw]{\tiny $\times$}
child{node[]{$($}}
child{node[circle,draw]{\tiny $\times$}
child{node[circle,draw]{\tiny $\times$} 
child{node[]{$($}} child{node[]{$x$}} }  
child{node[circle,draw]{\tiny $\times$} 
child{node[]{$y$}} child{node[]{$)$}} }}}
child{node[circle,draw]{\tiny $\times$} 
child{node[]{$y$}} 
child{node[]{$)$}}} ;
\end{tikzpicture}
\caption{ Corresponding monomial $((xy)\ y)\in\F\angle{X}$.}
  \label{fig1:sub2}
\end{subfigure}
\caption{nonassociative \& noncommutative monomial and its corresponding noncommutative bracketed monomial}
\label{fig:test}
\end{figure}

Consider an arithmetic circuit $C$ computing a polynomial
$f\in\F\{X\}$. The circuit $C$ can be efficiently transformed to a
circuit $\tilde{C}$ that computes the corresponding polynomial
$\tilde{f}\in\F\angle{X, (, )}$ by simply introducing the bracketing
structure for each multiplication gate of $C$ in a bottom-up manner as
indicated in the following example figures. Consider the circuits
described in Figures \ref{fig2:sub1} and \ref{fig2:sub2} where
$f_i,g_i,h_i$'s are polynomials computed by subcircuits. Clearly the
bracket variables preserve the parse tree structure. The following
fact is immediate.

\begin{figure}[h]
\label{fig2}
\centering
\begin{subfigure}{.5\textwidth}
  \centering
  %\includegraphics[width=.4\linewidth]{image1}
  %\caption{A subfigure}
  \begin{tikzpicture}[sibling distance=1.75cm, level 2/.style={sibling distance =0.75cm}]
\node[circle,draw](z){\small $+$}
 child{node[circle,draw]{\small $\times$} child{node[]{$f_1$}} child{node[]{$f_2$}} } 
 child{node[circle,draw]{\small $\times$} child{node[]{$g_1$}} child{node[]{$g_2$}} }
 child{node[circle,draw]{\small $\times$} child{node[]{$h_1$}} child{node[]{$h_2$}} };
\end{tikzpicture}
\caption{ $C$ computing a nonassociative, \\ noncommutative polynomial.}
  \label{fig2:sub1}
\end{subfigure}%
\begin{subfigure}{.5\textwidth}
  \centering
  %\includegraphics[width=.4\linewidth]{image1}
  %\caption{A subfigure}
  \begin{tikzpicture}[sibling distance=2.75cm, level 2/.style={sibling distance =1.25cm}, level 3/.style={sibling distance =.65cm}]
\node[circle,draw](z){\small $+$}

 child{node[circle,draw]{\small $\times$} child{node[circle,draw]{\small $\times$} child{node[]{$($}} child{node[]{$f_1$}}} child{node[circle,draw]{\small $\times$} child{node[]{$f_2$} } child{node[]{$)$} }    }}
 child{node[circle,draw]{\small $\times$} child{node[circle,draw]{\small $\times$} child{node[]{$($}} child{node[]{$g_1$}}} child{node[circle,draw]{\small $\times$} child{node[]{$g_2$} } child{node[]{$)$} }    }}
 child{node[circle,draw]{\small $\times$} child{node[circle,draw]{\small $\times$} child{node[]{$($}} child{node[]{$h_1$}}} child{node[circle,draw]{\small $\times$} child{node[]{$h_2$} } child{node[]{$)$} }    }}
 ;
\end{tikzpicture}
\caption{ $\tilde{C}$ that computes the corresponding noncommutative polynomial.}
  \label{fig2:sub2}
\end{subfigure}
\caption{Nonassociative circuit and its corresponding noncommutative bracketed circuit}
\label{fig:test}
\end{figure}

\begin{comment}
 
\begin{figure}[h]\label{fig12}
\centering
\begin{tikzpicture}[sibling distance=1.75cm, level 2/.style={sibling distance =0.75cm}]
\node[circle,draw](z){\small $+$}
 child{node[circle,draw]{\small $\times$} child{node[]{$f_1$}} child{node[]{$f_2$}} } 
 child{node[circle,draw]{\small $\times$} child{node[]{$g_1$}} child{node[]{$g_2$}} }
 child{node[circle,draw]{\small $\times$} child{node[]{$h_1$}} child{node[]{$h_2$}} };
\end{tikzpicture}
\caption{ $C_{na}$ computing a non associative, non commutative polynomial.}
\end{figure}

\begin{figure}[h]\label{fig13}
\centering
\begin{tikzpicture}[sibling distance=3.75cm, level 2/.style={sibling distance =1.75cm}, level 3/.style={sibling distance =1cm}]
\node[circle,draw](z){\small $+$}

 child{node[circle,draw]{\small $\times$} child{node[circle,draw]{\small $\times$} child{node[]{$(_{\ell}$}} child{node[]{$f_1$}}} child{node[circle,draw]{\small $\times$} child{node[]{$f_2$} } child{node[]{$)_r$} }    }}
 child{node[circle,draw]{\small $\times$} child{node[circle,draw]{\small $\times$} child{node[]{$(_{\ell}$}} child{node[]{$g_1$}}} child{node[circle,draw]{\small $\times$} child{node[]{$g_2$} } child{node[]{$)_r$} }    }}
 child{node[circle,draw]{\small $\times$} child{node[circle,draw]{\small $\times$} child{node[]{$(_{\ell}$}} child{node[]{$h_1$}}} child{node[circle,draw]{\small $\times$} child{node[]{$h_2$} } child{node[]{$)_r$} }    }}
 ;
\end{tikzpicture}
\caption{ $C_{nc}$ computing the non commutative polynomial corresponding $C_{na}$.}
\end{figure}
\end{comment}

\begin{obs}\label{obs1}
A nonassociative noncommutative circuit $C$ computes a nonzero
polynomial $f\in\F\{X\}$ if and only if the corresponding 
noncommutative circuit $\tilde{C}$ computes a nonzero polynomial
$\tilde{f}\in\F\angle{X,(, )}$.
\end{obs}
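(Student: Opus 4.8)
The plan is to turn the pictures in Figures~\ref{fig1:sub1}--\ref{fig2:sub2} into an explicit, well-behaved encoding of monomials, and then to check two things: that the encoding is injective, and that the circuit transformation $C\mapsto\tilde C$ realises exactly this encoding at the level of computed polynomials. Concretely, I would define an encoding map $\phi$ on monomials of $\F\{X\}$ (recall each such monomial is a binary tree with variable-labelled leaves) by $\phi(x_i)=x_i$ on a single leaf and $\phi\big((m_1\,m_2)\big)=(\,\phi(m_1)\,\phi(m_2)\,)$ for a product monomial --- that is, prepend the symbol ``$($'', append ``$)$'', and concatenate the encodings of the left and right subtrees. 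Extending $\F$-linearly yields a linear map $\phi\colon\F\{X\}\to\F\angle{X,(,)}$ with $\phi(0)=0$. The statement then follows once we establish (i) $\phi$ is injective, and (ii) $\tilde C$ computes $\phi(f)$, where $f$ is the polynomial computed by $C$.

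For (i), the crux is that every word in the image of $\phi$ can be uniquely parsed back into a binary tree by a left-to-right scan that tracks bracket depth. First note, by a trivial induction, that each $\phi(m)$ is a balanced bracket word. Now, if such a word is a single variable symbol, the tree is that leaf; otherwise it begins with ``$($'' and ends with the ``$)$'' matching it, and deleting this outermost pair leaves a word of the form $\phi(m_1)\phi(m_2)$ whose split point is forced: it lies right after the first symbol if that symbol is a variable, and otherwise right after the ``$)$'' that matches the new leading ``$($''. Recursing on the two halves recovers $m_1$ and $m_2$, so the procedure is a well-defined left inverse of $\phi$ on monomials; hence $\phi$ sends the monomial basis of $\F\{X\}$ to a set of distinct (therefore linearly independent) monomials of $\F\angle{X,(,)}$, so $\phi$ is injective. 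This unique-readability argument is the only real content of the proof; the one mild subtlety is the degenerate interaction between degree-one monomials, which carry no brackets, and product monomials, which always do, but the leading-symbol case split handles it.

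For (ii), I would argue by bottom-up induction over the gates of $C$ that the corresponding gate of $\tilde C$ computes the $\phi$-image of the polynomial computed at the $C$-gate. Input gates (variables and scalars) are untouched by the transformation, giving the base case; a $+$ gate remains a $+$ gate, so the inductive step there is immediate from linearity of $\phi$. For a $\times$ gate with left input $v$ and right input $w$, the transformation (Figure~\ref{fig2:sub2}) replaces it by a small gadget that, in $\F\angle{X,(,)}$, computes the product word $(\,\tilde f_v\,\tilde f_w\,)$, where $\tilde f_v,\tilde f_w$ are the polynomials computed by the transformed subcircuits and the outer ``$($'', ``$)$'' are the bracket variables; by the induction hypothesis $\tilde f_v=\phi(f_v)$ and $\tilde f_w=\phi(f_w)$. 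Since multiplication in $\F\{X\}$ is the bilinear extension of the map sending a pair of monomials $(m_1,m_2)$ to the tree with left subtree $m_1$ and right subtree $m_2$, and $\phi$ sends that tree to $(\,\phi(m_1)\,\phi(m_2)\,)$, bilinearity gives $\phi(f_v\cdot f_w)=(\,\phi(f_v)\,\phi(f_w)\,)$, which is precisely what the gadget computes. Hence the output gate of $\tilde C$ computes $\tilde f=\phi(f)$, and combining this with (i), $f$ is a nonzero polynomial in $\F\{X\}$ if and only if $\tilde f$ is a nonzero polynomial in $\F\angle{X,(,)}$.
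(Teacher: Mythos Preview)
Your proof is correct and is precisely the natural formalisation of what the paper leaves implicit: the paper states Observation~\ref{obs1} as ``immediate'' after the figures and gives no argument beyond the pictorial description of the transformation $C\mapsto\tilde C$. Your two ingredients --- injectivity of the bracket encoding on monomials via unique parsing, and a bottom-up gate induction showing $\tilde C$ computes $\phi(f)$ --- are exactly what one must check, and you handle the one genuine wrinkle (degree-one monomials carry no brackets) cleanly. There is no alternative approach to compare against; you have simply supplied the proof the paper omitted.
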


We recall that the free noncommutative ring $\F\angle{\ex}$ is not a
unique factorization domain (UFD) \cite{Cohn} as shown by the
following standard example : $x y x + x = x (y x + 1) = (x y + 1)
x$. In contrast, the nonassociative free ring $\F\{\ex\}$ is a
UFD. 

\begin{proposition}\label{ufd}
  Over any field $\F$, the ring $\F\{\ex\}$ is a unique factorization
  domain. More precisely, any polynomial $f\in\F\{X\}$ can be
  expressed a product $f=g_1g_2\cdots g_r$ of irreducible polynomials
  $g_i\in\F\{X\}$. The factorization is unique upto constant factors
  and reordering.
\end{proposition}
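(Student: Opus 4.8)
The plan is to exploit the rigid tree structure of monomials in $\F\{X\}$. Recall (cf.\ Observation~\ref{obs1}) that a monomial of degree $k$ in $\F\{X\}$ is a full binary tree with $k$ leaves labelled by variables, equivalently a correctly bracketed word over $X\cup\{(,)\}$. The crucial feature is \emph{unique readability}: every monomial $m$ of degree $\ge 2$ has a unique outermost split $m=(m'\,m'')$ into its left and right subtrees, and $(m',m'')\mapsto(m'\,m'')$ is a bijection from ordered pairs of monomials onto the monomials of degree $\ge 2$. Hence, for nonconstant $p,q\in\F\{X\}$ there is no cancellation at the outermost bracket: $pq=\sum_{m',m''}c_{m'}(p)\,c_{m''}(q)\,(m'\,m'')$, so $c_{(m'\,m'')}(pq)=c_{m'}(p)\,c_{m''}(q)$ and $\mathrm{supp}(pq)=\{(m'\,m'')\;:\;m'\in\mathrm{supp}(p),\ m''\in\mathrm{supp}(q)\}$. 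Specialising to $p,q$ with a single monomial each shows $\F\{X\}$ is an integral domain and that $\deg$ is additive on products, so the units are exactly the nonzero scalars $\F^{*}$. Call $f$ \emph{irreducible} if it is a nonunit and not a product of two nonunits; every irreducible is then nonconstant. Existence of an irreducible factorization follows by induction on $\deg f$: a unit is the empty product, and otherwise either $f$ is irreducible or $f=gh$ with $g,h$ nonzero nonconstants, each of degree strictly between $0$ and $\deg f$.

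The uniqueness argument rests on one lemma: if $f=p_1p_2=q_1q_2$ with $p_1,p_2,q_1,q_2$ all nonconstant, then $p_1=\lambda q_1$ and $p_2=\lambda^{-1}q_2$ for some $\lambda\in\F^{*}$. To prove it, read off the outermost splits of the monomials of $f$: by the displayed formula $\mathrm{supp}(p_1)=\{m'\;:\;(m'\,m'')\in\mathrm{supp}(f)\text{ for some }m''\}=\mathrm{supp}(q_1)$, and likewise $\mathrm{supp}(p_2)=\mathrm{supp}(q_2)$. Fixing any $m_0''\in\mathrm{supp}(p_2)=\mathrm{supp}(q_2)$, the identities $c_{m'}(p_1)\,c_{m_0''}(p_2)=c_{(m'\,m_0'')}(f)=c_{m'}(q_1)\,c_{m_0''}(q_2)$, valid for every $m'$, give $p_1=\lambda q_1$ with $\lambda=c_{m_0''}(q_2)/c_{m_0''}(p_2)\in\F^{*}$; the matching scalar on the second factor is then forced because $\F\{X\}$ is a domain.

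For uniqueness itself, assume $f$ nonconstant (the unit case being trivial) and induct on $\deg f$. If $f$ is irreducible, any nontrivial factorization would, via its outermost split, write $f$ as a product of two nonunits, which is impossible; so every irreducible factorization of $f$ is trivial and they all coincide. Otherwise every irreducible factorization has length $\ge 2$; taking the outermost split of two of them writes $f=p_1p_2=q_1q_2$, where $p_1,p_2$ (resp.\ $q_1,q_2$) are the sub-products along the two branches of the corresponding bracketing trees, all four nonconstant. By the lemma, $p_1=\lambda q_1$ and $p_2=\lambda^{-1}q_2$; since $\deg p_1,\deg p_2<\deg f$, the induction hypothesis applied to $p_1$ and to $p_2$ shows that on each branch the multiset of irreducible factors is determined up to scalars (the constant $\lambda$ being absorbed into one factor). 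Recombining the two branches, the multiset $\{g_1,\dots,g_r\}$ of irreducible factors of $f$ is unique up to scalar multiples; in fact the argument also pins down the bracketing tree, so ``reordering'' here only reflects the freedom to shuffle scalars among the factors.

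I do not expect a genuine obstacle: the whole proof is driven by the single structural fact that multiplication in $\F\{X\}$ is collision-free at the outermost bracket, after which everything is two routine inductions on degree. The only point that needs care is the lemma, where one must observe that $f=p_1p_2$ with both $p_i$ nonconstant leaves $f$ with no stray low-degree monomials, so that the outermost-split formula genuinely describes all of $\mathrm{supp}(f)$ and the support identities used above are exact.
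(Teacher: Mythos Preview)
Your approach via unique readability of the outermost bracket is natural and morally correct, but the central lemma is \emph{false} as stated, and this is precisely the subtlety flagged in the paper's Remark following the proposition. Take $g\in\F\{X\}$ nonconstant and distinct $\alpha,\beta\in\F$: then $(g+\alpha)(g+\beta)=g\cdot g+(\alpha+\beta)g+\alpha\beta=(g+\beta)(g+\alpha)$, so with $p_1=g+\alpha$, $p_2=g+\beta$, $q_1=g+\beta$, $q_2=g+\alpha$ we have $f=p_1p_2=q_1q_2$ with all four factors nonconstant, yet $p_1$ is not a scalar multiple of $q_1$. Your support argument breaks exactly here: the claim in your final paragraph that ``$f=p_1p_2$ with both $p_i$ nonconstant leaves $f$ with no stray low-degree monomials'' is simply wrong once either factor carries a constant term (already $(x+1)\cdot y=(xy)+y$ has the stray monomial $y$), so the identity $\mathrm{supp}(p_1)=\{m':(m'\,m'')\in\mathrm{supp}(f)\}$ fails and the deduction $p_1=\lambda q_1$ collapses.

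The paper does not give a self-contained proof of the proposition; it instead points to the factorization algorithm of Section~\ref{sec:fact}, whose correctness analysis constitutes the proof. That argument first applies your outermost-split idea to the \emph{top homogeneous part} $f_d$ (where there genuinely are no stray monomials) to recover $g_{d_1}$ and $h_{d_2}$ up to scalar, then peels off the lower homogeneous components $g_i,h_j$ one by one via left/right partial derivatives, and finally determines the constant terms $\alpha,\beta$ by solving a quadratic equation --- whose two roots correspond exactly to the two orderings $(g+\alpha)(g+\beta)$ and $(g+\beta)(g+\alpha)$. Your proof can be repaired along similar lines: restrict the support lemma to the leading homogeneous parts, show inductively that all lower homogeneous parts of the two outer factors agree up to a common scalar, and then handle the constants separately; but as written, the gap is real.
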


\begin{remark}
Usually, even the ordering of the irreducible factors in the
factorization is unique. Exceptions arise because of the equality
$(g+\alpha)(g+\beta)=(g+\beta)(g+\alpha)$ for any polynomial
$g\in\F\{X\}$ and $\alpha,\beta\in\F$.
\end{remark}

We shall indirectly see a proof of this proposition in
Section~\ref{sec:fact} where we describe the algorithm for computing
all irreducible factors.

Given a noncommutative circuit $C$ computing a homogeneous polynomial
in $\F\angle{\ex}$ and a monomial $m$ over $\ex$, one can talk of the
left and right derivatives of $C$ w.r.t $m$ \cite{AJR15}. Let
$f=\sum_{m'}c_{m'}(f) m'$ for some $f\in\F\angle{\ex}$ and $A$ be the
subset of monomials $m'$ of $f$ that have $m$ as prefix. Then the left
derivative of $f$ w.r.t.\ $m$ is
\[
\frac{\partial^{\ell}f}{\partial m}=\sum_{m'\in A} c_{m'}(f) m'',
\]
where $m'=m\cdot m''$ for $m'\in A$. Similarly we can define the right
derivative $\frac{\partial^{r}f}{\partial m}$. As shown in
\cite{AJR15}, if $f$ is given by a circuit $C$ then in deterministic
polynomial time we can compute circuits for
$\frac{\partial^{\ell}f}{\partial m}$ and
$\frac{\partial^{r}f}{\partial m}$. We briefly discuss this in the
following lemma.

\begin{lemma}{\rm\cite{AJR15}}\label{derivative}
  Given a noncommutative circuit $C$ of size $s$ computing a
  homogeneous polynomial $f$ of degree $d$ in $\F\angle{\ex}$ and
  monomial $m$, there is a deterministic $\poly(n,d,s)$ time algorithm
  that computes circuits $C_{m,\ell}$ and $C_{m,r}$ for the left and
  right derivatives $\frac{\partial^\ell C}{\partial m}$ and
  $\frac{\partial^r C}{\partial m}$, respectively. 
\end{lemma}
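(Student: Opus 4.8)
The plan is to homogenize $C$ first and then build, by structural induction on the homogenized circuit, a family of auxiliary circuits computing ``partial derivatives'' of each gate with respect to every \emph{sub-word} of $m$, not merely with respect to prefixes of $m$. The reason a sub-word family is needed (and not just prefixes of $m$) is that at a $\times$ gate the prefix $m$ that we want to strip off may straddle the boundary between the two factors, so the right recursive quantities are ``strip the factor $m_{a+1}\cdots m_b$ from the left of whatever this gate computes'' for all $0\le a\le b\le |m|$.

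Concretely: by the standard homogenization construction, replace $C$ by a circuit $C'$ of size $\poly(s,d)$ in which every gate $v$ computes a homogeneous polynomial $f_v$ of a fixed degree $d_v\le d$, and the two children $u,w$ of every $\times$ gate have fixed degrees $d_u,d_w$ with $d_u+d_w=d_v$. Write $m=m_1m_2\cdots m_k$ with each $m_i\in X$; if $k>d$ the derivative is $0$, so assume $k\le d$. For each gate $v$ of $C'$ and each pair $0\le a\le b\le k$ with $b-a\le d_v$, let $D_v^{a,b}$ be the polynomial obtained from $f_v$ by keeping only the monomials that have $m_{a+1}m_{a+2}\cdots m_b$ as a prefix and deleting that prefix from each of them; thus $D_v^{a,a}=f_v$, the quantity $D_v^{a,a+d_v}$ is simply the scalar $c_{m_{a+1}\cdots m_{a+d_v}}(f_v)$, and the left derivative we want is $\partial^\ell f/\partial m = D^{0,k}_{\mathrm{out}}$. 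These are computed bottom-up: at an input labeled by a scalar $\alpha$ set $D_v^{a,a}=\alpha$; at an input labeled $x_j$ set $D_v^{a,a}=x_j$ and $D_v^{a,a+1}=1$ if $m_{a+1}=x_j$ and $0$ otherwise; at a $+$ gate $v=u+w$ put $D_v^{a,b}=D_u^{a,b}+D_w^{a,b}$; and at a $\times$ gate $v=u\times w$, if $b-a\le d_u$ put $D_v^{a,b}=D_u^{a,b}\cdot f_w$ (the stripped word lies entirely in the left factor), while if $b-a>d_u$ put $D_v^{a,b}=D_u^{a,a+d_u}\cdot D_w^{a+d_u,b}$ (the left factor is forced to equal $m_{a+1}\cdots m_{a+d_u}$, contributing its coefficient in $f_u$, and the rest is stripped from $f_w$). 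One checks the required index inequalities hold in each branch, so every $D_v^{a,b}$ is formed from $O(1)$ earlier objects together with gates already present in $C'$.

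Hence the construction adds $\poly(s,d)$ gates and runs in $\poly(n,s,d)$ time; correctness is the obvious induction on the structure of $C'$ using the decomposition above. The circuit $C_{m,r}$ for the right derivative is obtained by the mirror-image construction, stripping suffixes instead of prefixes. The only genuine subtlety — and the reason the lemma assumes $f$ (and hence $C'$) homogeneous — is the $\times$-gate case: in a non-homogeneous circuit the split of a monomial across a product gate is not determined, so one cannot cleanly decide whether the stripped occurrence of $m$ sits in the left factor, in the right factor, or straddles both. Homogenizing fixes $d_u$, turning this into the finite three-way case analysis above; everything else is routine bookkeeping of sizes and indices.
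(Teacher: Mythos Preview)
Your argument is correct: after homogenizing so that every gate has a fixed degree, the quantities $D_v^{a,b}$ satisfy exactly the recurrences you wrote, the index constraints you need ($a+d_u\le k$ and $b-a-d_u\le d_w$ in the straddling case) follow from $b-a\le d_v$ and $b\le k$, and the total number of new gates is $O(k^2)\cdot\text{size}(C')=\poly(s,d)$.

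The paper, however, proves the lemma by a different and somewhat slicker device: it builds a \emph{substitution automaton} with $d'+2$ states (where $d'=|m|$) that, on an input word, checks whether the prefix equals $m$ and replaces that prefix by $1$. Each variable $x_i$ is then substituted by the corresponding $(d'+2)\times(d'+2)$ transition matrix (with entries in $\{0,1,x_i\}$), the circuit $C$ is evaluated symbolically over these matrices, and the desired left derivative is read off from the $(1,d'+1)$ entry of the output matrix. In effect the matrix indices play the role of your $(a,b)$, and matrix multiplication at a $\times$ gate performs the sum over all possible split points automatically, so no prior gate-level homogenization is needed; as the paper notes, the same substitution even handles non-homogeneous $f$. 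Your approach trades this automaton machinery for an explicit structural induction, at the cost of first homogenizing the circuit so that the split point at each $\times$ gate is uniquely determined. Both routes yield $\poly(n,s,d)$ time; yours is more elementary, while the paper's is more uniform and extends directly beyond the homogeneous case.
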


\begin{proof}
  We explain only the left partial derivative case. Let $m$ be a
  degree $d'$ monomial and $f\in\F\angle{X}$ be a homogeneous degree
  $d$ polynomial $f$ computed by circuit $C$. In \cite{AJR15},a small
  substitution deterministic finite automaton $A$ with $d'+2$ states
  is constructed that recognizes all length $d$ strings with prefix
  $m$ and substitutes 1 for prefix $m$. The transition matrices of
  this automaton can be represented by $(d'+2)\times(d'+2)$
  matrices. From the evaluation of circuit $C$ on these transition
  matrices will recover the circuit for $\frac{\partial^\ell
    C}{\partial m}$ in the $(1,d^{'} + 1)^{th}$ entry of the output
  matrix.
\end{proof}

The left and right partial derivatives of inhomogeneous polynomials
are similarly defined. The same matrix substitution works for
non-homogeneous polynomials as well \cite{AJR15}. As discussed above,
given a nonassociative arithmetic circuit $C$ computing a polynomial
$f\in\F\{\ex\}$, we can transform $C$ into a noncommutative circuit
$\tilde{C}$ that computes a polynomial $\tilde{f}\in\F\angle{X,(,)}$.
Suppose we want to compute the left partial derivative of $f$
w.r.t.\ a monomial $m\in\F\{X\}$. Using the tree structure of $m$ we
transform it into a monomial $\tilde{m}\in \F\angle{X,(,)}$ and then
we can apply Lemma~\ref{derivative} to $\tilde{C}$ and $\tilde{m}$ to
compute the required left partial derivative. We can similarly compute
the right partial derivative. We use this in Section
\ref{sec:fact}. 

We also note the following simple fact that the homogeneous parts of a
polynomial $f\in\F\{X\}$ given by a circuit $C$ can be computed
efficiently. We can apply the above transformation to obtain
circuit $\tilde{C}$ and use a standard lemma (see e.g., \cite{SY10})
to compute the homogeneous parts of $\tilde{C}$.

\begin{lemma}\label{hom}
  Given a noncommutative circuit $C$ of size $s$ computing a
  noncommutative polynomial $f$ of degree $d$ in $\F\angle{\ex,(,)}$,
  one can compute homogeneous circuits $C_j$ (where each gate computes
  a homogeneous polynomial) for $j^{th}$ homogeneous part $f_j$ of
  $f$, where $0\leq j\leq d$, deterministically in time
  $\poly(n,d,s)$.
\end{lemma}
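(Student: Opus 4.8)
The plan is to use the standard gate-by-gate homogenization, carried out directly on the noncommutative circuit $C$, treating the bracket symbols $($ and $)$ exactly like ordinary variables since they too contribute to the total degree of a monomial in $\F\angle{\ex,(,)}$. For every gate $g$ of $C$ I introduce $d+1$ new gates $g^{(0)},g^{(1)},\dots,g^{(d)}$, maintaining the invariant that $g^{(j)}$ computes the degree-$j$ homogeneous part of the polynomial computed at $g$ in $C$. The desired circuit $C_j$ is then the sub-circuit rooted at $o^{(j)}$, where $o$ is the output gate of $C$; its output gate $o^{(j)}$ computes $f_j$.

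The construction proceeds bottom-up along a topological order of $C$. If $g$ is an input gate labelled by a scalar $\alpha\in\F$, set $g^{(0)}:=\alpha$ and $g^{(j)}:=0$ for $j\ge 1$ (the constant $0$ is a legal scalar input). If $g$ is an input gate labelled by a variable --- one of $x_1,\dots,x_n$ or one of $(,)$ --- set $g^{(1)}:=$ that variable and $g^{(j)}:=0$ for $j\ne 1$. If $g=g_1+g_2$, set $g^{(j)}:=g_1^{(j)}+g_2^{(j)}$. If $g=g_1\times g_2$, with $g_1$ the designated left input and $g_2$ the right, set
\[
g^{(j)}:=\sum_{a+b=j,\ 0\le a,b\le d} g_1^{(a)}\times g_2^{(b)},
\]
realised by a balanced fan-in-two addition tree over the at most $j+1\le d+1$ product gates, each product keeping $g_1^{(a)}$ as its left input so that noncommutativity is respected. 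By induction on the topological order, each $g^{(j)}$ computes the degree-$j$ part of the polynomial at $g$: the scalar/variable base cases are immediate, addition is degreewise, and since multiplication of monomials adds degrees, the degree-$j$ part of a product is exactly $\sum_{a+b=j}(\text{degree-}a\text{ part of left})\,(\text{degree-}b\text{ part of right})$.

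For the size and time bound, each original gate spawns $d+1$ new gates, and each original $\times$ gate additionally contributes $O(d)$ product and $+$ gates for each output degree, i.e.\ $O(d^2)$ extra gates in all, so the new circuit has size $O(s\,d^2)=\poly(n,d,s)$ and is produced in that much time. Every gate of $C_j$ is homogeneous: each $g^{(j)}$ has degree $j$, and the auxiliary product gates $g_1^{(a)}\times g_2^{(b)}$ inside the expansion of a $\times$ gate have degree $a+b=j$ as well; hence $C_j$ is a homogeneous circuit for $f_j$. There is no genuine obstacle here --- this is a routine adaptation of the commutative homogenization lemma (see e.g.\ \cite{SY10}); the only points needing a little care are preserving the left/right labels of multiplication gates throughout, so that $C_j$ still computes the correct element of $\F\angle{\ex,(,)}$ rather than a commuted version, and using fan-in-two $+$ gates inside the expansion of $\times$ gates so as to stay within Definition~\ref{ckt}. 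An alternative route --- substituting $x_i\mapsto t\,x_i$, evaluating at $d+1$ distinct scalars, and interpolating --- also works, but it requires $|\F|\ge d+1$ (or passing to a field extension), so I would present the gate-by-gate version.
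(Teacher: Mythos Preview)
Your proof is correct and is precisely the standard homogenization argument the paper has in mind: the paper does not give its own proof of this lemma but simply cites it as a standard fact (see e.g., \cite{SY10}), and your gate-by-gate construction is exactly that standard proof, carefully adapted to preserve the left/right ordering at $\times$ gates.
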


\section{Identity Testing in $\F\{X\}$}\label{sec:pit}

In this section we describe our identity testing algorithm. 

\begin{theorem}\label{thm-pit}
  Let $f(x_1,x_2,\ldots,x_n)\in\F\{X\}$ be a degree $d$ polynomial
  given by an arithmetic circuit of size $s$. Then in deterministic
  $\poly(s,n,d)$ time we can test if $f$ is an identically zero
  polynomial in $\F\{X\}$.
\end{theorem}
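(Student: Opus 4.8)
The plan is to adapt the Raz--Shpilka layer-by-layer PIT for noncommutative ABPs, exploiting the key structural feature of the nonassociative model: every monomial $m$ of degree $\ge 2$ has a \emph{unique} factorization $m=(m_1 m_2)$, so the coefficient of $m$ at a gate is determined by the coefficients of $m_1$ and $m_2$ at strictly lower-degree gates. First I would reduce to the homogeneous case: using Lemma~\ref{hom} (applied via the encoding of Observation~\ref{obs1}) extract, for each $1\le j\le d$, a homogeneous circuit $C_j$ of size $\poly(s,d)$ computing the degree-$j$ part $f_j$, and note $f\equiv 0$ iff every $C_j\equiv 0$; so assume $C$ is homogeneous of degree $d$. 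Let $G_j$ be the set of gates of $C$ computing a homogeneous polynomial of degree $j$, and for a degree-$j$ monomial $m$ let $\vec c_j(m)\in\F^{G_j}$ be the vector whose $g$-th entry is the coefficient of $m$ in the polynomial computed at gate $g$.

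The core of the algorithm is to compute, for $j=1,2,\dots,d$, a set $\mathbb{B}_j$ of linearly independent vectors in $\F^{G_j}$, each tagged with a witnessing monomial, such that $\mathrm{span}(\mathbb{B}_j)=\mathrm{span}\{\vec c_j(m): m \text{ a degree-}j\text{ monomial}\}$; then $f\equiv 0$ iff for every $j$ the output gate's coordinate is zero on all of $\mathbb{B}_j$ (equivalently $\mathbb{B}_d$ certifies $f_d$, and similarly each level). For the base case $j=1$ the monomials are just the variables $x_i$, so $\mathbb{B}_1$ is obtained by Gaussian elimination on the $n$ vectors $\vec c_1(x_i)$, each computable directly from $C$. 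For the inductive step, I would use that every degree-$(j{+}1)$ monomial is uniquely $m=(m_1 m_2)$ with $\deg m_1 = d_1$, $\deg m_2 = d_2$, $d_1+d_2=j+1$, $d_1,d_2\ge 1$; because the bracket structure is preserved, the coefficient of $(m_1 m_2)$ at a $\times$-gate $g=u\times v$ with $u\in G_{d_1}$, $v\in G_{d_2}$ equals $c_{m_1}(u)\cdot c_{m_2}(v)$, and at a $+$-gate it is the sum over inputs; crucially $+$-gates only combine gates of the same degree split, so one can write $\vec c_{j+1}((m_1 m_2))$ as a fixed bilinear expression $\Phi_{d_1,d_2}(\vec c_{d_1}(m_1),\vec c_{d_2}(m_2))$ determined by the circuit topology. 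Hence the span of all $\vec c_{j+1}(m)$ is the span of $\{\Phi_{d_1,d_2}(\vec a,\vec b): \vec a\in\mathbb{B}_{d_1},\vec b\in\mathbb{B}_{d_2}, d_1+d_2=j+1\}$ — a set of at most $\sum_{d_1} |G_{d_1}|\,|G_{d_2}| \le s^2$ vectors, each tagged with the monomial $(m_1 m_2)$ built from the witnesses. Running Gaussian elimination on this list yields $\mathbb{B}_{j+1}$ of size $\le |G_{j+1}|$.

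The main obstacle — and the point I would be most careful about — is justifying that the span of $\mathbb{B}_{j+1}$ really equals the full span of \emph{all} degree-$(j{+}1)$ coefficient vectors, not merely a subspace. This is where unique parsing does the work: since $m=(m_1m_2)$ is the only way to build $m$, no degree-$(j{+}1)$ coefficient vector can arise from ``mixing'' monomials with different left/right degree splits (unlike in a general noncommutative circuit, where a gate can receive the same monomial through many parse trees and the Raz--Shpilka argument needs the ABP's rigid layering). One checks by induction on circuit structure that at every gate $g\in G_{j+1}$, the coordinate $c_{(m_1m_2)}(g)$ is a $\F$-bilinear function of $(\,c_{m_1}(\cdot)\text{ restricted to }G_{d_1}\,,\,c_{m_2}(\cdot)\text{ restricted to }G_{d_2}\,)$ with coefficients depending only on $C$; since $\vec c_{d_1}(m_1)\in\mathrm{span}(\mathbb{B}_{d_1})$ and $\vec c_{d_2}(m_2)\in\mathrm{span}(\mathbb{B}_{d_2})$ by the inductive hypothesis, bilinearity gives $\vec c_{j+1}((m_1m_2))\in\mathrm{span}\{\Phi_{d_1,d_2}(\vec a,\vec b)\}$, as needed. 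Finally, the running time is $\poly(n,s,d)$: there are $d$ homogeneous circuits each of size $\poly(s,d)$; building $\mathbb{B}_{j+1}$ touches $O(s^2)$ vectors of dimension $\le s$ and does one Gaussian elimination; the monomial tags have size $O(d)$ and are built in the obvious way, and $f\equiv 0$ is reported iff the output-gate coordinate vanishes on every $\mathbb{B}_j$, in which case the algorithm also produces a certificate monomial with nonzero coefficient.
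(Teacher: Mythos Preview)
Your proposal is correct and follows essentially the same approach as the paper: reduce to the homogeneous case, maintain for each degree $j$ a basis $\mathbb{B}_j$ of monomial-coefficient vectors indexed by the degree-$j$ gates, and build $\mathbb{B}_{j+1}$ from the products $\mathbb{B}_{d_1}\times\mathbb{B}_{d_2}$ using bilinearity and the unique parsing $m=(m_1 m_2)$. One small slip: it is not true that ``$+$-gates only combine gates of the same degree split'' --- a $+$-gate may well add $\times$-gates with different $(d_1,d_2)$ splits --- but your actual argument (and the paper's) only needs that for a \emph{fixed} monomial $(m_1 m_2)$ the $\times$-gates with the wrong split contribute zero, which is exactly what unique parsing guarantees; your bilinear map $\Phi_{d_1,d_2}$ and the spanning claim are correct as stated.
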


\begin{proof}
By Lemma \ref{hom} we can assume that the input is a homogeneous
nonassociative circuit $C$ computing some homogeneous degree $d$
polynomial in $\F\{X\}$ (i.e.\ every gate in $C$ computes a
homogeneous polynomial).  Also, all the $\times$ gates in $C$ have
fanin 2 and $+$ gates have unbounded fanin. We can assume the output
gate is a $+$ gate. We can also assume w.l.o.g.\ that the $+$ and
$\times$ gates alternate in each input gate to output gate path in the
circuit (otherwise we introduce sum gates with fan-in 1).

% We first apply the bracketing transformation, as described in
% Section~\ref{prelim}, to circuit $C$ to obtain the corresponding
% noncommutative circuit $\tilde{C}$ computing a polynomial in
% $\F\angle{X,(,)}$.

The $j^{th}$-layer of circuit $C$ to be the set of all $+$ gates in
computing degree $j$ homogeneous polynomials. Let $s^{+}$ be the total
number of $+$ gates in $C$. To each monomial $m$ we can associate a
vector $v_m\in\F^{s^{+}}$ of coefficients, where $v_m$ is indexed by
the $+$ gates in $C$, and $v_m[g]$ is the coefficient of monomial $m$
in the polynomial computed at the $+$ gate $g$. We can also write
\[
v_m[g] = c_m(p_g),
\]
where $p_g$ is the polynomial computed at the sum gate $g$.

For the $j^{th}$ layer of $+$ gates, we will maintain a maximal
linearly independent set $\mathbb{B}_j$ of vectors $v_m$ of monomials.
These vectors correspond to degree $j$ monomials. Although
$v_m\in\F^{s^{+}}$, notice that $v_m[g]=0$ at all $+$ gates that do
not compute a degree $j$ polynomial. Thus, $|\mathbb{B_j}|$ is bounded
by the number of $+$ gates in the $j^{th}$ layer. Hence,
$|\mathbb{B_j}|\le s$.

The sets $\mathbb{B}_j$ are computed inductively for increasing values
of $j$. For the base case, the set $\mathbb{B}_1$ can be easily
constructed by direct computation.  Inductively, suppose the sets
$\mathbb{B}_i : 1\leq i\leq j-1$ are already constructed. We describe
the construction of $\mathbb{B}_j$. Computing $\mathbb{B}_d$ and checking
if there is a nonzero vector in it yields the identity testing algorithm.

We now describe the construction for the $j^{th}$ layer assuming we
have basis $\B_{j'}$ for every $j' < j$. Consider a $\times$ gate with
its children computing homogeneous polynomials of degree $d_1$ and
$d_2$ respectively. Notice that $j=d_1 + d_2$ and $0 < d_1 , d_2 <
j$. Consider the monomial\footnote{We note that the nonassociative
  monomial $m_1 m_2$ is a binary tree with the root having two
  children: the left child is the root of the binary tree for $m_1$
  and the right child is the root of the binary tree for $m_2$.} set

\[
M =\{ m_1 m_2 \ | \ v_{m_1} \in \mathbb{B}_{d_1} \text{ and } v_{m_2}
\in \mathbb{B}_{d_2} \}.
\]

We construct vectors $\{v_{m} \ | \ m \in M \}$ as follows.

\[
v_{m_1 m_2}[g] = \sum_{(g_{d_1}, g_{d_2})} v_{m_1}[g_{d_1}] v_{m_2}[g_{d_2}],
\]

where $g$ is a $+$ gate in the $j^{th}$ layer, $g_{d_1}$ is a $+$ gate
in the $d_1^{th}$ layer, $g_{d_2}$ is a $+$ gate in the $d_2^{th}$
layer, and there is a $\times$ gate which is input to $g$ and computes
the product of $g_{d_1}$ and $g_{d_2}$.

Let $\mathbb{B}_{d_1 , d_2}$ denote a maximal linearly independent
subset of $\{v_m \mid m \in M \}$.  Then we let $\mathbb{B}_d$ be a
maximal linearly independent subset of
\[
\bigcup_{ d_1 + d_2 = d}\mathbb{B}_{d_1 , d_2}. 
\]

\begin{claim}
  For every monomial $m$ of degree $j$, $v_m$ is in the span of
  $\mathbb{B}_j$.
\end{claim}

\noindent\textit{Proof of Claim.}~~  Let $m=m_1 m_2$ and the degree of
$m_1$ is $d_1$ and the degree of $m_2$ is $d_2$ \footnote{Here a
  crucial point is that for a nonassociative monomial of degree $d$,
  such a choice for $d_1$ and $d_2$ is \emph{unique}. This is a place
  where a general noncommutative circuit behaves very differently.}.
By \emph{Induction Hypothesis} vectors $v_{m_1}$ and $v_{m_2}$ are in
the span of $\mathbb{B}_{d_1}$ and $\mathbb{B}_{d_2}$ respectively.
Hence, we can write
\[
v_{m_1} = \sum^{D_1}_{i=1} \alpha_i v_{m_i} \ \ v_{m_i} \in \mathbb{B}_{d_1}
\textrm{~~~and~~~}
v_{m_2} = \sum^{D_2}_{j=1} \beta_j v_{{m'}_j} \ \ v_{{m'}_j} \in
\mathbb{B}_{d_2},
\]
where $|\mathbb{B}_{d_j}|=D_j$. Now, for a gate $g$ in the $j^{th}$
layer, By Induction Hypothesis and by construction we have
 \begin{align*}
  v_{m}[g] &= \sum_{(g_{d_1} , g_{d_2})} v_{m_1}[g_{d_1}] v_{m_2}[g_{d_2}] 
  = \sum_{g_{d_1} , g_{d_2}} (\sum^{D_1}_{i=1} \alpha_i v_{m_i}[g_{d_1}]) (\sum^{D_2}_{j=1} \beta_j v_{{m'}_j}[g_{d_2}])\\
  &= \sum^{D_1}_{i=1} \sum^{D_2}_{j=1} \alpha_i \beta_j \sum_{g_{d_1} , g_{d_2}} v_{m_i}[g_{d_1}] v_{{m'}_j}[g_{d_2}]
  = \sum^{D_1}_{i=1} \sum^{D_2}_{j=1} \alpha_i \beta_j v_{m_i {m'}_j}[g]. 
 \end{align*}

 Thus $v_m$ is in the span of $\mathbb{B}_{d_1 ,d_2}$ and hence in the
 span of $\mathbb{B}_j$. This proves the claim.

The PIT algorithm only has to check if $\mathbb{B}_d$ has a nonzero
vector. This proves the claim.

Suppose the input nonassociative circuit $C$ computing some degree $d$
polynomial $f\in\F\{X\}$ is inhomogeneous. Then, using Lemma \ref{hom}
we can first compute in polynomial time homogeneous circuits $C_j
~:~0\leq j\leq d$, where $C_j$ computes the degree-$j$ homogeneous
part $f_j$. Then we run the above algorithm on each $C_j$ to check
whether $f$ is identically zero. This completes the proof of the
theorem.
\end{proof}

\section{Polynomial Factorization in $\F\{X\}$}\label{sec:fact}

In this section we describe our polynomial-time white-box
factorization algorithm for polynomials in $\F\{X\}$. More precisely,
given as input a nonassociative circuit $C$ computing a polynomial
$f\in\F\{X\}$, the algorithm outputs circuits for all irreducible
factors of $f$. The algorithm uses as subroutine the PIT algorithm for
polynomial in $\F\{X\}$ described in Section~\ref{sec:pit}.

To facilitate exposition, we completely describe a deterministic
polynomial-time algorithm that computes a nontrivial factorization
$f=g\cdot h$ of $f$, by giving circuits for $g$ and $h$, unless $f$ is
irreducible. We will briefly outline how this extends to finding all
irreducible factors efficiently.

We start with a special case.

\begin{lemma}\label{lem2}
 Let $f\in\F\{\ex\}$ be a degree $d$ polynomial given by a circuit $C$
 of size $s$ such that the constant term in $f$ is zero. Furthermore,
 suppose there is a factorization $f=g\cdot h$ such that the constant
 terms in $g$ and $h$ are also zero. Then in deterministic
 $\poly(n,d,s)$ time we can compute the circuits for polynomials $g$
 and $h$.
\end{lemma}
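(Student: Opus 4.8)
The plan is to reduce the whole task to two elementary facts about the bracket encoding and then read off \emph{both} factors from a single certificate monomial. Write $L$ and $R$ for the two auxiliary bracket variables (denoted $($ and $)$ in Section~\ref{prelim}), and let $\tilde C$ over $\F\angle{\ex,L,R}$ be the transform of $C$, so that $\tilde C$ computes $\tilde f$ and, since $\widetilde{(uv)}=L\,\tilde u\,\tilde v\,R$, one has $\tilde f=L\,\tilde g\,\tilde h\,R$. First I would record: (a) in $\F\{\ex\}$, $(uv)=(u'v')$ iff $u=u'$ and $v=v'$ (the defining property of the free nonassociative ring); and (b) no valid monomial-encoding word is a proper prefix, or a proper suffix, of another valid monomial-encoding word (an easy induction on the binary tree, using that a leaf encodes to a single variable symbol while an internal node encodes to a word beginning with $L$ and ending with $R$). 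Combining (a), (b) and the hypothesis that $g$ and $h$ have zero constant term yields the clean identity $c_{(uv)}(f)=c_u(g)\,c_v(h)$ for every monomial $u$ of $g$ and $v$ of $h$; in particular every monomial of $f$ has degree at least $2$, a word $L\,\tilde u\,\tilde v\,R$ has $L\,\widetilde{m_1}$ as a prefix exactly when $u=m_1$, and it has $\widetilde{m_2}\,R$ as a suffix exactly when $v=m_2$.

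Next I would obtain a witness: run the identity-testing algorithm of Theorem~\ref{thm-pit}, in the form that also returns a certificate monomial together with its coefficient, on $C$. (If $f\equiv0$ the statement is vacuous with $g=h=0$.) Let $m=(m_1m_2)$ be the returned monomial and $\gamma:=c_m(f)\neq0$; then $\gamma=c_{m_1}(g)\,c_{m_2}(h)$ by the identity above, so $c_{m_1}(g)\neq0$ and $c_{m_2}(h)\neq0$. Read off the two subtrees $m_1,m_2$ of $m$ and their encodings $\widetilde{m_1},\widetilde{m_2}$.

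The heart of the argument is the extraction. To recover $h$, apply Lemma~\ref{derivative} to $\tilde C$ and the word $L\,\widetilde{m_1}$ over the alphabet $\ex\cup\{L,R\}$ --- the substitution-automaton construction works verbatim over this extended alphabet and, as remarked after Lemma~\ref{derivative}, for inhomogeneous circuits as well --- to obtain a circuit for $\partial^{\ell}\tilde f/\partial(L\,\widetilde{m_1})$. By the prefix statement above this equals $\sum_{v}c_{m_1}(g)\,c_v(h)\,\tilde v\,R=c_{m_1}(g)\,\tilde h\,R$, and a second application of Lemma~\ref{derivative} --- the right derivative with respect to the one-letter word $R$ --- strips the trailing bracket, giving a circuit for $c_{m_1}(g)\,\tilde h$ (equivalently, this realizes the nonassociative left derivative $\partial^{\ell}f/\partial m_1=c_{m_1}(g)\cdot h$). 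Symmetrically, the right derivative of $\tilde f$ with respect to $\widetilde{m_2}\,R$ followed by the left derivative with respect to $L$ gives a circuit for $c_{m_2}(h)\,\tilde g$, the encoding of $c_{m_2}(h)\cdot g$. Since every monomial produced is a valid encoding, the correspondence of Section~\ref{prelim} converts these into nonassociative circuits for $c_{m_1}(g)\cdot h$ and $c_{m_2}(h)\cdot g$ in $\F\{\ex\}$. Finally set $\hat g:=c_{m_2}(h)\cdot g$ and $\hat h:=\gamma^{-1}c_{m_1}(g)\cdot h$ (rescaling one circuit by the known scalar $\gamma^{-1}$); then $\hat g\cdot\hat h=\gamma^{-1}\gamma\,gh=f$, both $\hat g$ and $\hat h$ have zero constant term, and each is a nonzero scalar multiple of the corresponding original factor, which --- by Proposition~\ref{ufd} --- is all that $f$ determines. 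We output $\hat g$ and $\hat h$. The transformation $C\mapsto\tilde C$, the single PIT call, the constantly many invocations of Lemma~\ref{derivative}, and the rescalings each cost $\poly(n,d,s)$, so the total running time is $\poly(n,d,s)$.

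The step I expect to be the genuine obstacle is the extraction, and in particular realizing that one must differentiate with respect to the \emph{bracketed} prefix $L\,\widetilde{m_1}$ (and then clean up the trailing $R$) rather than with respect to $\widetilde{m_1}$ itself: this is exactly what turns ``the left factor equals $m_1$'' into a prefix condition on encoded words, and it is precisely here that non-associativity is indispensable --- fact (a) fails in $\F\angle{\ex}$ (for instance $xyx+x=x(yx+1)=(xy+1)x$), so over the associative ring this derivative would collect spurious contributions and the coefficient identity $c_{(uv)}(f)=c_u(g)c_v(h)$ would break. Once facts (a), (b) and that identity are verified carefully, the rest of the algorithm is routine bookkeeping around Lemma~\ref{derivative} and scalars.
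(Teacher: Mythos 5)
Your proof is correct, but it follows a genuinely different route from the paper's. The paper first homogenizes: it computes circuits for the homogeneous parts $f_j$ via Lemma~\ref{hom}, runs the PIT of Theorem~\ref{thm-pit} on the \emph{top} part $C_d$ to get $m=(m_1m_2)$ with $f_d=g_{d_1}h_{d_2}$, and then, for each $i$, isolates $g_ih_{d_2}$ inside $f_{i+d_2}=g_ih_{d_2}+\sum_t g_th_{d_2-(t-i)}$ by pruning $C_{i+d_2}$ to the product gates of degree type $(i,d_2)$ before taking the right derivative with respect to $m_2$; the pieces $c_{m_2}(h_{d_2})\,g_i$ are then summed over $i$ (and symmetrically for $h$). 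You instead treat the inhomogeneous $\tilde f$ in one shot, differentiating with respect to the bracket-augmented words $L\,\widetilde{m_1}$ and $\widetilde{m_2}\,R$ and then stripping the leftover bracket; the justification that these derivatives collect exactly the monomials $(m_1v)$, respectively $(um_2)$, rests on your facts (a) and (b), namely freeness of $\F\{X\}$ together with the observation that the set of valid encodings is both a prefix code and a suffix code. This buys a shorter argument with no homogeneous decomposition, no per-degree circuit surgery, and no need for the certificate monomial to have top degree; the price is that correctness now hinges entirely on the prefix-code property (b), which the paper never states explicitly but does implicitly use (its homogeneous warm-up also left-differentiates by $\widetilde{m_1}$ across monomials whose left subtrees have varying degrees), whereas the paper's degree-by-degree extraction only ever compares encodings of equal length. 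Your bracket-matching proof of (b) is standard and sound. One place where you are exactly as informal as the paper: both arguments assert without detail that the noncommutative derivative circuits produced by Lemma~\ref{derivative} can be converted back into nonassociative circuits; since Observation~\ref{obs1} makes the bracketed representation faithful and every downstream step (PIT, the product check, recursion) can be carried out on that representation directly, this is harmless.
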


\begin{proof}
We first consider the even more restricted case when $C$ computes a
homogeneous degree $d$ polynomial $f\in\F\{X\}$. For the purpose of
computing partial derivatives, it is convenient to transform $C$ into
the noncommutative circuit $\tilde{C}$, as explained in
Section~\ref{prelim}, which computes the fully bracketed polynomial
$\tilde{f}\in\F\angle{X,(,)}$. Using Theorem~\ref{thm-pit} we compute
a monomial $m=(m_1 m_2)$ where $m_1$ and $m_2$ are also fully
bracketed. We can transform $\tilde{C}$ to drop the outermost opening
and closing brackets. Now, using Lemma~\ref{derivative}, we compute
the resulting circuits left partial derivative w.r.t.\ $m_1$ and right
partial derivative w.r.t.\ $m_2$. Call these $\tilde{f}_1$ and
$\tilde{f}_2$. We can check if $\tilde{f}=(\tilde{f}_1 \tilde{f}_2)$:
we first recover the corresponding nonassociative circuits for $f_1$
and $f_2$ from the circuits for $\tilde{f}_1$ and $\tilde{f}_2$.  Then
we can apply the PIT algorithm of Theorem~\ref{thm-pit} to check if
$f=f_1 f_2$. Clearly, $f$ is irreducible iff $f\ne f_1
f_2$. Continuing thus, we can fully factorize $f$ into its
irreducible factors.

Now we prove the actual statement. Applying Lemma \ref{hom}, we
compute homogeneous circuits $C_j : 1\leq j\leq d$ for the homogeneous
degree $j$ component $f_j$ of the polynomial $f$.  Clearly $f_d =
g_{d_1} h_{d_2}$. We run the PIT algorithm of Theorem~\ref{thm-pit} on
the circuit $C_d$ to extract a monomial $m$ of degree $d$ along with
its coefficient $c_m(f_d)$ in $f_d$. Notice that the monomial $m$ is
of the form $m=(m_1~m_2)$. If $g$ and $h$ are nontrivial factors of
$f$ then $m_1$ and $m_2$ are monomials in $g$ and $h$
respectively. Compute the circuits for the left and right derivatives
with respect to $m_1$ and $m_2$.
\[
\frac{\partial^{\ell} C_d}{\partial m_{1}}  =  c_{m_1}(g_{d_1}) 
\cdot h_{d_2} \textrm{~~~and~~~}
\frac{\partial^{r} C_d}{\partial m_{2}}  =  c_{m_2}(h_{d_2})\cdot g_{d_1}.
\]

In general the $(i+ d_2)^{th} : i\leq d-d_2$ homogeneous part of $f$
can be expressed as
\[
 f_{i + d_2} = g_i h_{d_2} + \sum_{t=i+1}^{i+ d_2 - 1} g_t h_{{d_2} \ - \ (t - i)}.
\]
We depict the circuit $C_{i+d_2}$ for the polynomial $f_{i+d_2}$ in
Figure \ref{fig4}. The top gate of the circuit is a $+$ gate. From
$C_{i+d_2}$, we construct another circuit $C'_{i+d_2}$ keeping only
those $\times$ gates as children whose left degree is $i$ and right
degree is $d_2$. The resulting circuit is shown in Figure \ref{fig5}.
The circuit $C'_{i+d_2}$ must compute $g_i h_{d_2}$.  By taking the
right partial of $C'_{i+d_2}$ with respect to $m_2$, we obtain the
circuit for $c_{m_2}(h_{d_2})~ g_i$.

\tikzset{
itria/.style={
  draw,dashed,shape border uses incircle,
  isosceles triangle,shape border rotate=90,yshift=-1.45cm},
rtria/.style={
  draw,dashed,shape border uses incircle,
  isosceles triangle,isosceles triangle apex angle=90,
  shape border rotate=-45,yshift=0.2cm,xshift=0.5cm},
ritria/.style={
  draw,dashed,shape border uses incircle,
  isosceles triangle,isosceles triangle apex angle=110,
  shape border rotate=-55,yshift=0.1cm},
letria/.style={
  draw,dashed,shape border uses incircle,
  isosceles triangle,isosceles triangle apex angle=110,
  shape border rotate=235,yshift=0.1cm}
}

\begin{figure}[h]\label{fig2}
\centering
\begin{tikzpicture}[sibling distance=5cm, level 2/.style={sibling distance =2cm}]
\node[circle,draw](z){\small $+$}
  child{node[circle,draw]{\small $\times$}  child{node[ritria]{\small $i$}} child{node[letria]{\small $d_2$}}}
  %child{node[circle,draw]{$\times$}  child{node[]{$x$}} child{node[]{$y$}}}
  child{node[circle,draw]{\small $\times$}  child{node[ritria]{\small $k$}} child{node[letria]{\small $l$}}}
  %child{node[circle,draw]{$\times$}  child{node[]{$x$}} child{node[]{$y$}}}
  child{node[circle,draw]{\small $\times$}  child{node[ritria]{\small $i$}} child{node[letria]{\small $d_2$}}};
\end{tikzpicture}
\caption{Circuit $C_{i+d_2}$ for $f_{i+d_2}$}\label{fig4}
\end{figure}

\begin{figure}[h]\label{fig3}
\centering
\begin{tikzpicture}[sibling distance=5cm, level 2/.style={sibling distance =2cm}]
\node[circle,draw](z){\small $+$} 
child{node[circle,draw]{\small $\times$} child{node[ritria]{\small $i$}}
  child{node[letria]{\small $d_2$}}} 
child{node[circle,draw]{\small $\times$} 
child{node[ritria]{\small $i$}} child{node[letria]{\small $d_2$}}};
\end{tikzpicture}
\caption{$C'_{i+d_2}$ keeps only degree $(i,d_2)$ type $\times$ 
gates.}\label{fig5}
\end{figure}

We repeat the above construction for each $i\in[d_1]$ to obtain
circuits for $c_{m_2}(h_{d_2}) g_{i}$ for $1 \leq i \leq
d_1$. Similarly we can get the circuits for $c_{m_1}(g_{d_1}) h_i$ for
each $i\in [d_2]$ using the left derivatives with respect to the
monomial $m_1$.

By adding the above circuits we get the circuits $C_g$ and $C_h$ for
$c_{m_2}(h_{d_2}) g$ and $c_{m_1}(g_{d_1}) h$ respectively.  We set
$C_g = \frac{c_{m_2}(h_{d_2})}{c_{m}(f)} g$ so that $C_g C_h =f$.
Using PIT algorithm one can easily check whether $g$ and $h$ are
nontrivial factors. In that case we further recurse on $g$ and $h$ to
obtain their irreducible factors.
\end{proof}

Now we consider the general case when $f$ and its factors $g, h$ have
arbitrary constant terms. In the subsequent proofs we assume, for
convenience, that $\deg(g)\geq \deg(h)$. The case when $\deg(g) <
\deg(h)$ can be handled analogously.  We first consider the case
$\deg(g)=\deg(h)$.

\begin{lemma}\label{lem3a}
  For a degree $d$ polynomial $f\in \F\{X\}$ given by a circuit $C$
  suppose $f=(g+\alpha)(h+\beta)$, where $g,h\in\F\{X\}$ such that
  $\deg(g)=\deg(h)$, and $\alpha,\beta\in\F$. Suppose $m=(m_1 m_2)$ is
  a nonzero degree $d$ monomial. Then, in deterministic polynomial
  time we can compute circuits for the polynomials $c_{m_1}(g)\cdot h$
  and $c_{m_2}(h)\cdot g$, where $c_{m_1}(g)$ and $c_{m_2}(h)$ are
  coefficient of $m_1$ and $m_2$ in $g$ and $h$ respectively.
\end{lemma}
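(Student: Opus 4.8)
The plan is to reduce to the homogeneous manipulation already carried out in the proof of Lemma~\ref{lem2}, using the equal-degree hypothesis to confine the ``contamination'' coming from the constants $\alpha,\beta$ to low degrees. Write $k=\deg(g)=\deg(h)$, so $d=2k$, and assume without loss of generality that $g$ and $h$ have zero constant term (any constant term of $g$ can be absorbed into $\alpha$, and likewise for $h$); then $c_{m_1}(g)=c_{m_1}(g_k)$ and $c_{m_2}(h)=c_{m_2}(h_k)$, where $g_j,h_j$ denote the degree-$j$ homogeneous parts. Expanding, $f=gh+\beta g+\alpha h+\alpha\beta$, and since $\deg(\beta g),\deg(\alpha h)\le k$, for every $j>k$ we have $f_j=(gh)_j=\sum_{a+b=j,\ a,b\le k} g_a h_b$. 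Since $m=(m_1m_2)$ is nonzero in $f$ it is nonzero in $f_d=g_kh_k$, and by the uniqueness of the top-level split of a nonassociative monomial this forces $\deg(m_1)=\deg(m_2)=k$ with $c_{m_1}(g_k)\ne 0$ and $c_{m_2}(h_k)\ne 0$.

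Next I would extract the pieces $g_i h_k$ and $g_k h_j$ by the degree-restriction trick used in Lemma~\ref{lem2}. Using Lemma~\ref{hom}, compute homogeneous circuits $C_j$ for $f_j$, and put each in the alternating normal form in which the output is a $+$ gate whose children are $\times$ gates $u\times v$ with $u,v$ homogeneous. For $1\le i\le k$, let $C'_{i+k}$ be the circuit obtained from $C_{i+k}$ by retaining only those $\times$-children of the output gate whose left child has degree $i$ and right child has degree $k$: every monomial whose parse tree passes through such a gate has top-level split $(i,k)$, and among the summands $g_a h_{i+k-a}$ (with $i\le a\le k$) of $f_{i+k}$ only $g_i h_k$ is supported on monomials with split $(i,k)$, so $C'_{i+k}$ computes exactly $g_i h_k$. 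Symmetrically, for $1\le j\le k$ let $C''_{k+j}$ retain only the $\times$-children with left degree $k$ and right degree $j$; it computes $g_k h_j$.

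Then I would apply the derivative machinery. Passing to the bracketed encoding (Observation~\ref{obs1}) and dropping the outermost pair of bracket symbols exactly as in the proof of Lemma~\ref{lem2}, Lemma~\ref{derivative} yields a circuit for the right partial derivative of $g_ih_k$ with respect to $\tilde m_2$, which is $c_{m_2}(h_k)\,g_i$, and a circuit for the left partial derivative of $g_kh_j$ with respect to $\tilde m_1$, which is $c_{m_1}(g_k)\,h_j$; translating back to nonassociative circuits and adding them up,
\[
\sum_{i=1}^{k} c_{m_2}(h_k)\,g_i = c_{m_2}(h_k)\,g = c_{m_2}(h)\,g,
\qquad
\sum_{j=1}^{k} c_{m_1}(g_k)\,h_j = c_{m_1}(g_k)\,h = c_{m_1}(g)\,h,
\]
where we used $g_0=h_0=0$ and $g_j=h_j=0$ for $j>k$. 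Every step is deterministic and runs in $\poly(n,d,s)$ time, proving the lemma.

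The step I expect to be delicate is the justification that $C'_{i+k}$ computes precisely $g_ih_k$ (and not some other portion of $f_{i+k}$): this is where both the uniqueness of the top-level split of nonassociative monomials and the hypothesis $\deg(g)=\deg(h)$ enter — the latter guarantees that, for $i\ge 1$, the degree-$(i+k)$ part of $f$ receives no contribution from $\beta g$ or $\alpha h$, so the whole of $f_{i+k}$ comes from $gh$ and partitions cleanly according to the $(\text{left degree},\text{right degree})$ of the top multiplication gate. When $\deg(g)\ne\deg(h)$ this clean split breaks down, which is exactly why that case has to be treated separately.
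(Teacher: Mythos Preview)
Your proof is correct, but it takes a more laborious route than the paper. You recycle the degree-restriction trick of Lemma~\ref{lem2}: for each $1\le i\le k$ you isolate $g_ih_k$ (resp.\ $g_kh_j$) from the homogeneous part $f_{i+k}$ by keeping only the $\times$-children of the output with top-level split $(i,k)$, take a separate partial derivative for each piece, and then sum. The paper instead takes a \emph{single} left partial derivative of the whole (inhomogeneous) circuit $C$ with respect to $m_1$: since $\deg(m_1)=k=\deg(g)=\deg(h)$, the terms $\beta g$ and $\alpha h$ contribute only the scalars $\beta\,c_{m_1}(g)$ and $\alpha\,c_{m_1}(h)$ to this derivative, while $gh$ contributes $c_{m_1}(g)\,h$; dropping the constant term already yields a circuit for $c_{m_1}(g)h$ (and symmetrically for $c_{m_2}(h)g$). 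Thus the paper exploits the equal-degree hypothesis more sharply --- the contamination from $\alpha,\beta$ lands entirely in the constant term of the derivative, not merely in degrees $\le k$ of $f$ --- and avoids the loop over homogeneous degrees altogether. Your approach is a valid, uniform extension of the Lemma~\ref{lem2} machinery and makes the role of the top-level split explicit, but it is doing more work than necessary here.
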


\begin{proof}
  We can write $f=(g+\alpha)(h+\beta) = g\cdot h + \beta\cdot g +
  \alpha\cdot h + \alpha\cdot \beta$. Applying the PIT algorithm of
  Theorem~\ref{thm-pit} on $f$, we compute a maximum degree monomial
  $m=(m_1 m_2)$. Computing the left derivative of circuit $C$ w.r.t.\
  monomial $m_1$, after removing the outermost brackets, we obtain a
  circuit computing $c_{m_1}(g) h + \beta c_{m_1}(g) + \alpha
  c_{m_1}(h)$. Dropping the constant term, we obtain a circuit
  computing polynomial $c_{m_1}(g)h$.  Similarly, computing the right
  derivative w.r.t $m_2$ yields a circuit for $c_{m_2}(h) g + \beta
  c_{m_2}(g) + \alpha c_{m_2}(h)$. Removing the constant term we get a
  circuit for $c_{m_2}(h)g$.
\end{proof}

When $\deg(g) > \deg(h)$ we can recover $h+ \beta$ entirely (upto a
scalar factor) and we need to obtain the homogeneous parts of $g$
separately.

\begin{lemma}\label{lem3}
  Let $f=(g+ \alpha)\cdot (h+ \beta)$ be a polynomial of degree $d$ in
  $\F\{\ex\}$ given by a circuit $C$. Suppose $\deg(g) >
  \deg(h)$. Then, in deterministic polynomial time we can compute the
  circuit $C'$ for $c_{m_1}(g)(h+\beta)$.
\end{lemma}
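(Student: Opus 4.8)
The plan is to mimic the proof of Lemma~\ref{lem3a}; the single new ingredient is a degree comparison which lets us \emph{keep}, rather than discard, the constant term of the left derivative.

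First I would pin down the monomial $m_1$. We may assume $1\le\deg(h)<\deg(g)$, since otherwise the factorization is trivial. Writing $f=gh+\beta g+\alpha h+\alpha\beta$, the summands $\beta g$, $\alpha h$, $\alpha\beta$ all have degree strictly less than $\deg(g)+\deg(h)=d$, so the degree-$d$ homogeneous part of $f$ is exactly $g_{d_1}h_{d_2}$, where $d_1=\deg(g)$ and $d_2=\deg(h)$. Apply Lemma~\ref{hom} to get the homogeneous circuit $C_d$ and run the PIT algorithm of Theorem~\ref{thm-pit} on it to obtain a nonzero degree-$d$ monomial $m$ together with $c_m(f)$. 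In $\F\{\ex\}$ a degree-$d$ monomial has a unique decomposition $m=(m_1\,m_2)$; since $m$ is a monomial of $g_{d_1}h_{d_2}$, necessarily $m_1$ is a monomial of $g$ with $\deg(m_1)=d_1$ and $c_{m_1}(g)\neq 0$ (and $m_2$ is a monomial of $h$ with $\deg(m_2)=d_2$). This is the $m_1$ occurring in the statement.

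Next I would compute a circuit for the left quotient of $f$ by $m_1$, exactly as in Lemma~\ref{lem3a}: transform $C$ into the bracketed noncommutative circuit $\tilde C$, drop the outermost brackets, apply Lemma~\ref{derivative} with respect to $\tilde m_1$ (which, as noted in Section~\ref{prelim}, works for inhomogeneous and for nonassociative polynomials after the encoding), and translate the result back into a nonassociative circuit $C'$. Arguing on degrees as in Lemma~\ref{lem3a} --- the monomials of $f$ of the form $(m_1\,\rho)$ arise only from the $gh$-part, since they have degree $>d_1>d_2$, and contribute $c_{m_1}(g)c_\rho(h)\,\rho$, while the monomial $m_1$ itself picks up $\beta\,c_{m_1}(g)$ and $\alpha\,c_{m_1}(h)$ from $\beta g$ and $\alpha h$ --- the circuit $C'$ computes
\[
 c_{m_1}(g)\,h \;+\; \beta\,c_{m_1}(g) \;+\; \alpha\,c_{m_1}(h).
\]
The new point is the final step: because $\deg(m_1)=d_1=\deg(g)>\deg(h)$, the monomial $m_1$ cannot occur in $h$, so $c_{m_1}(h)=0$; hence $C'$ in fact computes $c_{m_1}(g)\,h+\beta\,c_{m_1}(g)=c_{m_1}(g)(h+\beta)$, which is the circuit claimed. (In contrast with Lemma~\ref{lem3a}, the constant term is now retained rather than dropped: the only surviving constant, $\beta\,c_{m_1}(g)$, is precisely the constant term of $c_{m_1}(g)(h+\beta)$.) Since Lemma~\ref{hom}, Theorem~\ref{thm-pit}, the bracket encoding and Lemma~\ref{derivative} all run in deterministic $\poly(n,d,s)$ time, so does the whole procedure.

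The step I expect to require the most care --- though it is already dealt with inside Lemma~\ref{lem3a} --- is the bracket bookkeeping: checking that ``the left derivative of $\tilde C$ with respect to $\tilde m_1$, after removing the outermost brackets'' genuinely computes the nonassociative left quotient $\sum_\rho c_{(m_1\rho)}(f)\rho+c_{m_1}(f)$ without leftover bracket symbols. This rests on the elementary observation (implicit in Section~\ref{prelim}) that the bracket encoding of a nonassociative monomial has no proper nonempty balanced prefix, so that a monomial $(m'\,m'')$ of $f$ has $\tilde m_1$ as a prefix of $\widetilde{m'}\,\widetilde{m''}$ if and only if $m'=m_1$; granting that, the remainder is a routine degree count identical to the one behind Lemma~\ref{lem3a}.
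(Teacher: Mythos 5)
Your proposal is correct and follows essentially the same route as the paper: obtain $m=(m_1\,m_2)$ via the PIT algorithm, take the left partial derivative of $C$ with respect to $m_1$, and use $\deg(m_1)=\deg(g)>\deg(h)$ to conclude that the only surviving terms are $c_{m_1}(g)h+\beta c_{m_1}(g)=c_{m_1}(g)(h+\beta)$. You simply make explicit the degree argument ($c_{m_1}(h)=0$) and the bracket-prefix bookkeeping that the paper leaves implicit.
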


\begin{proof}
  Again, applying the PIT algorithm to $f$ we obtain a nonzero degree
  $d$ monomial $m=(m_1~m_2)$ of $f$. If $f=(g+\alpha)(h+\beta)$ then
  $f=g\cdot h + \alpha h + \beta g + \alpha\beta$. As
  $\deg(g)>\deg(h)$, the left partial derivative of $C$ with respect
  to $m_1$ yields a circuit $C'$ for $c_{m_1}(g)~(h + \beta)$.
\end{proof}

Extracting the homogeneous components from the circuit $C'$ given by
Lemma~\ref{lem3}, yields circuits for $\{ c_{m_1}(g) h_i \ : \ i \in
[d_2] \}$. We also get the constant term $c_{m_1}(g) \beta$.  Now we
obtain the homogeneous components of $g$ as follows.

\begin{lemma}\label{lem4a}
  Suppose circuit $C$ computes $f$, where $f=(g+\alpha)~(h+\beta)$ of
  degree $d$, $\alpha,\beta\in\F$, $\deg(g)=d_1$ and $\deg(h)=d_2$
  such that $d_1 > d_2$. 
\begin{itemize}
\item Let $m$ be a nonzero degree $d$ monomial of $f$ such that
  $m=(m_1~m_2)$. Then circuits for $\{ c_{m_2}(h) g_i : i \in [d_1 -
  d_2 +1, d_1] \}$ can be computed in deterministic polynomial time.

\item The $(d_2 + i)^{th}$ homogeneous part of $f$ is given by $f_{d_2
    + i}=\sum_{j=0}^{d_2-1}~g_{d_2 + i -j}~h_{j} + g_{i} ~h_{d_2}$ for
  $1\leq i\leq d_1 - d_2$.  From the circuit $C_{d_2 + i}$ of $f_{d_2
    + i}$, we can efficiently compute circuits for $\{ c_{m_2}
  (h_{d_2}) g_{i} : 1\leq i\leq d_1 - d_2\}$.
\end{itemize}

\end{lemma}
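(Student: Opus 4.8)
The plan is to produce circuits for all the homogeneous parts $\mu\,g_i$, where $\mu:=c_{m_2}(h_{d_2})$, in decreasing order of $i$, using throughout the right partial derivative of Lemma~\ref{derivative} — applied, as in the proof of Lemma~\ref{lem2}, to the bracket-encoded homogeneous circuit with its outermost brackets stripped — together with what has already been built before Lemma~\ref{lem4a}: the homogeneous circuits $C_e$ ($0\le e\le d$) of $f$ from Lemma~\ref{hom}, the circuit $C'$ of Lemma~\ref{lem3} for $\lambda(h+\beta)$ with $\lambda:=c_{m_1}(g)$, and the scalars $c_m(f)=\lambda\mu$ (handed over by the PIT run, Theorem~\ref{thm-pit}) and $\lambda\beta$ (the constant term of $C'$, read off by evaluating at $\mathbf 0$). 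Note first that, since $m=(m_1 m_2)$ is a nonzero degree-$d$ monomial of $f$ and $f_d=g_{d_1}h_{d_2}$, the unique root split of $m$ forces $\deg m_1=d_1$, $\deg m_2=d_2$, $c_{m_1}(g_{d_1})=\lambda\ne 0$ and $c_{m_2}(h_{d_2})=\mu\ne 0$; in particular $c_m(f)=\lambda\mu\ne 0$, so $\beta/\mu=\lambda\beta/c_m(f)$ is a \emph{known} field element. This last fact is the one piece of side information that I expect the second bullet to need.

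For the first bullet I would fix $i\in[d_1-d_2+1,d_1]$, so $d_2+i>d_1$. Expanding $f=gh+\beta g+\alpha h+\alpha\beta$ and collecting the degree-$(d_2+i)$ part, the terms $\beta g_{d_2+i}$ and $\alpha h_{d_2+i}$ vanish, leaving $f_{d_2+i}=g_i h_{d_2}+\sum_{q<d_2}g_{d_2+i-q}h_q$. I would then apply the right derivative w.r.t.\ $m_2$ to $C_{d_2+i}$. The step that does the work is a \emph{purity} statement for this derivative: in the bracket encoding, $\widetilde{m_2}$ occurs as a suffix of the outer-bracket-stripped encoding of a monomial $(uv)$ only when $v=m_2$, because $\widetilde{m_2}$ is a balanced bracket word and if $\deg v\ne d_2$ the match would force a nonempty proper suffix of the balanced word $\widetilde{u}$ (or $\widetilde{v}$) to be balanced, which a ballot-type height argument for these encodings forbids. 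Consequently $\partial^r(g_{d_2+i-q}h_q)/\partial m_2=0$ for $q<d_2$ while $\partial^r(g_i h_{d_2})/\partial m_2=\mu\,g_i$, so $\partial^r C_{d_2+i}/\partial m_2$ is the desired circuit for $c_{m_2}(h_{d_2})\,g_i$, computed in $\poly(n,d,s)$ time.

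For the second bullet I would run $i=d_1-d_2,\,d_1-d_2-1,\dots,1$. Here $d_2+i\le d_1$, so the degree-$(d_2+i)$ part keeps the extra term $\beta g_{d_2+i}$: $f_{d_2+i}=\beta g_{d_2+i}+g_i h_{d_2}+\sum_{q=1}^{d_2-1}g_{d_2+i-q}h_q$, which is exactly the displayed identity read with the natural convention $h_0=\beta$. Applying $\partial^r(\cdot)/\partial m_2$ to $C_{d_2+i}$ and using the same purity statement, the $q<d_2$ terms again die, $\partial^r(\beta g_{d_2+i})/\partial m_2=\beta\,\partial^r g_{d_2+i}/\partial m_2$, and $\partial^r(g_i h_{d_2})/\partial m_2=\mu\,g_i$, so we obtain a circuit computing $\mu\,g_i+\beta\,\partial^r g_{d_2+i}/\partial m_2$. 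Since $i<d_2+i\le d_1$, a circuit for $\mu\,g_{d_2+i}$ is already in hand (from the first bullet if $d_2+i>d_1-d_2$, otherwise from an earlier step of this recursion), so $\partial^r(\mu g_{d_2+i})/\partial m_2=\mu\,\partial^r g_{d_2+i}/\partial m_2$ is available; subtracting $\tfrac{\lambda\beta}{c_m(f)}$ times it gives
\[
\mu\,g_i \;=\; \frac{\partial^r C_{d_2+i}}{\partial m_2}\;-\;\frac{\lambda\beta}{c_m(f)}\cdot\frac{\partial^r(\mu g_{d_2+i})}{\partial m_2},
\]
a circuit for $c_{m_2}(h_{d_2})\,g_i$. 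The recursion is well founded because each $\mu g_i$ is expressed using only $\mu g_{d_2+i}$ with a strictly larger index, so the whole sequence is built in polynomial time.

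The part I expect to be the real obstacle is the purity of the right derivative in the bracket encoding: proving carefully that no degree-$d_2$ monomial can masquerade as a suffix straddling the root split of a monomial $(uv)$ with $\deg v\ne d_2$. Once that combinatorial fact about balanced bracket words is in place, the rest — tracking which homogeneous cross terms survive, carrying the scalar $\lambda\beta/c_m(f)$, and checking the dependency order of the recursion — is routine on top of Lemmas~\ref{derivative},~\ref{hom},~\ref{lem3} and Theorem~\ref{thm-pit}.
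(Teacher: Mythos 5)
Your proposal is correct, and it reaches the same conclusion by a noticeably leaner route than the paper. The paper isolates the term $g_i h_{d_2}$ inside $f_{d_2+i}$ \emph{before} differentiating: for $i>d_1-d_2$ it surgically keeps only the $\times$-gates of degree type $(i,d_2)$ feeding the top $+$ gate of $C_{i+d_2}$, and for $i\le d_1-d_2$ it explicitly rebuilds and subtracts $\beta g_{d_2+i}+\sum_{j=1}^{d_2-1}g_{d_2+i-j}h_j$ (using the circuits $c_{m_1}(g)h_j$ from Lemma~\ref{lem3}, the previously computed $c_{m_2}(h)g_{d_2+i-j}$, and division by $c_m(f)$) so that only $g_{i}h_{d_2}$ remains, and only then applies $\partial^r(\cdot)/\partial m_2$. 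You instead differentiate the whole homogeneous circuit first and prove that the cross terms $g_{d_2+i-q}h_q$ with $q\ne d_2$ are annihilated by $\partial^r(\cdot)/\partial m_2$, via the ballot/height argument showing that the balanced word $\widetilde{m_2}$ cannot occur as a suffix straddling the root split of a monomial $(uv)$ with $\deg v\ne d_2$; this leaves only the single spurious contribution $\beta\,\partial^r g_{d_2+i}/\partial m_2$, which you remove using the already-built circuit for $c_{m_2}(h_{d_2})g_{d_2+i}$ and the known scalar $\lambda\beta/c_m(f)$. Your purity lemma is correct (a proper nonempty suffix of a fully bracketed encoding has an excess of closing brackets, so it cannot be balanced), and it is worth noting that the paper already relies on its easy case ($\deg v=d_2$ forces $\tilde v=\widetilde{m_2}$) when asserting $\partial^r(g_ih_{d_2})/\partial m_2=c_{m_2}(h_{d_2})g_i$; you simply push the same combinatorial fact far enough to dispense with the gate-restriction surgery and with the reconstruction of the cross-term sum. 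The trade-off: the paper's version keeps the derivative step trivial at the cost of more circuit bookkeeping, while yours concentrates the work in one clean combinatorial lemma about the bracket encoding; both recursions are well founded for the same reason ($g_i$ depends only on $g_{d_2+i}$ with a strictly larger index).
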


\begin{proof}
  For the first part, fix any $i\in[d_1-d_2+1, d_1]$, and compute the
  homogeneous $(i + d_2)^{th}$ part $f_{i+d_2}$ of $f$ by a circuit
  $C_{i + d_2}$. Similar to Lemma \ref{lem2}, we focus on the
  sub-circuits of $C_{i + d_2}$ formed by $\times$ gate of the degree
  type $(i, d_2)$. Since $i$ is at least $d_1 - d_2 + 1$, such gates
  can compute the multiplication of a degree $i$ polynomial with a
  degree $d_2$ polynomial.  Then, by taking the right partial
  derivative with respect to $m_2$ we recover the circuits for
  $c_{m_2}(h_{d_2}) ~g_i$ for any $i\in[d_1-d_2+1, d_1]$.
 
  Next, the goal is to recover the circuits for $g_i$ (upto a scalar
  multiple), where $1\leq i\leq d_1-d_2$, and also recover the
  constant terms $\alpha$ and $\beta$. When $i \leq d_1 - d_2$ a
  product gate of type $(i,d_2)$ can entirely come from $g$ which
  requires a different handling.

  We explain only the case when $i=d_1 - d_2$ (the others are
  similar). For $i=d_1-d_2$, we have $f_{d_1} = \beta g_{d_1} +
  \sum_{j=1}^{d_2-1}~g_{d_1 -j}~h_{j} + g_{d_1 - d_2} ~h_{d_2}$. By
  Lemma \ref{lem3}, we can compute a circuit $C'$ for $c_{m_1}(g) (h +
  \beta)$. Extracting the constant term yields $c_{m_1}(g)
  \beta$. From Lemma \ref{lem4a} we have a circuit $C''$ for
  $c_{m_2}(h) g_{d_1}$. Multiplying these circuits, we obtain a
  circuit $C^*$ for $c_{m_2}(h)c_{m_1}(g)\beta g_{d_1}$.  Since
  $c_{m_2}(h)c_{m_1}(g) = c_{m}(f)$, dividing $C^*$ by $c_m (f)$
  yields a circuit for $\beta g_{d_1}$. Note that, by the first part
  of this lemma, we already have circuits for every term $g_{d_1-j}$
  appearing in the above sum. Subtracting $\beta g_{d_1} +
  \sum_{j=1}^{d_2-1}~g_{d_1 -j}~h_{j}$ from the circuit $C_{d_1}$ for
  $f_{d_1}$, yields a circuit for polynomial $g_{d_1 - d_2} h_{d_2}$.
  Computing the right derivative of the resulting circuit w.r.t $m_2$
  (Lemma~\ref{derivative}) yields a circuit for $c_{m_2}(h) g_{d_1 -
    d_2}$.
 
  For general $i\le d_1-d_2$, when we need to compute $g_i$, again we
  will have already computed circuits for all $g_j, j>i$. A suitable
  right derivative computation will yield a circuit for $c_{m_2}(h)
  g_{i}$.
\end{proof}

Lemmas \ref{lem2}, \ref{lem3a}, \ref{lem3}, and \ref{lem4a} yield an
efficient algorithm for computing circuits for the two factors
$c_{m_2}(h)(\sum_{i=1}^{d_1} g_i)$ and $c_{m_1}(g) (\sum_{i=1}^{d_2}
h_i)$ when $\deg(g) \ge \deg(h)$. The case when $\deg(g) < \deg(h)$ is
similarly handled using left partial derivatives in the above lemmas.

Now we explain how to compute the constant terms of the individual
factors. We discuss the case when $\alpha\ne 0$. The other case is
similar.

First we recall that given a monomial $m$ and a noncommutative circuit
$C$, the coefficient of $m$ in $C$ can be computed in deterministic
polynomial time \cite{AMS10}.  We know that $f_0 = \alpha\cdot \beta$.
We compute the coefficient of the monomial $m_1$ in the circuits for
polynomials $c_{m_2}(h)c_{m_1}(g) g h$, $c_{m_2}(h) g$, and
$c_{m_1}(g)h$. Let these coefficients be $a, b$ and $c$,
respectively. Moreover, we know that $c_{m_2}(h)c_{m_1}(g)$ is the
coefficient of monomial $m=(m_1 ~m_2)$ in $f$. Let the coefficient of
$m_1$ in $f$ be $\gamma$. Let $\gamma_1=c_{m_1}(g)$ and
$\gamma_2=c_{m_2}(h)$ and $\delta=c_{m_1}(g)c_{m_2}(h)$.

Now equating the coefficient of $m_1$ from both side of the equation
$f = (g + \alpha) (h + \beta)$ and substituting
$\beta=\frac{f_0}{\alpha}$, we get
\[
\gamma = \frac{a}{\gamma_1\gamma_2} + \frac{\alpha c}{\gamma_1} + 
\frac{f_0 b}{\alpha \gamma_2} =  \frac{a}{\delta} + \frac{\alpha c}{\gamma_1} + 
\frac{f_0 b}{\alpha \gamma_2}.
\]

Letting $\xi=\alpha \gamma_2$, this gives a quadratic equation in
the unknown $\xi$.
\[
c\xi^2 + (a - \gamma \delta)\xi + f_0 b\delta = 0.
\]

By solving the above quadratic equation we get two solutions $A_1$ and
$A_2$ for $\xi=\alpha\gamma_2$. Notice that $\beta\gamma_1 =
\frac{\delta f_0}{\xi}$. As we have circuits for $c_{m_2}(h) g=
\gamma_2 g$ and for $c_{m_1}(g) h= \gamma_1 h$, we obtain circuits for
$\gamma_2(g+\alpha)$ and $\gamma_1(h+\beta)$ (two solutions,
corresponding to $A_1$ and $A_2$). To pick the right solution, we can
run the $\pit$ algorithm to check if $\gamma_1\gamma_2 f$ equals the
product of these two circuits that purportedly compute
$\gamma_2(g+\alpha)$ and $\gamma_1(h+\beta)$.

Over $\Q$ we can just solve the quadratic equation in deterministic
polynomial time using standard method. If $\F=\F_q$ for $q=p^r$, we
can factorize the quadratic equation in deterministic time
$\poly(p,r)$ \cite{GS92}. Using randomness, one can solve this problem
in time $\poly(\log p,r)$ using Berlekamp's factoring algorithm
\cite{Ber71}. This also completes the proof of the following.

\begin{theorem}\label{thm-fact1}
Let $f\in\F\{X\}$ be a degree $d$ polynomial given by a circuit of
size $s$. If $\F=\Q$, in deterministic $\poly(s,n,d)$ time we can
compute a nontrivial factorization of $f$ or reports $f$ is
irreducible. If $\F$ is a finite field such that $char(\F)=p$, we
obtain a deterministic $\poly(s,n,d,p)$ time algorithm that computes a
nontrivial factorization of $f$ or reports $f$ is irreducible.
\end{theorem}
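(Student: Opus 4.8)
The plan is to assemble Theorem~\ref{thm-fact1} from Lemmas~\ref{lem2}, \ref{lem3a}, \ref{lem3}, and \ref{lem4a} together with the quadratic-equation reconstruction sketched in the discussion after Lemma~\ref{lem4a}. First I would run the PIT algorithm of Theorem~\ref{thm-pit} on the homogeneous circuit $C_d$ for $f_d$ (obtained via Lemma~\ref{hom}) to extract a nonzero top-degree monomial $m=(m_1\,m_2)$ together with its coefficient $c_m(f)$. Writing $d_1=|m_1|$ and $d_2=|m_2|$, note that in any nontrivial factorization $f=(g+\alpha)(h+\beta)$ the leading part satisfies $f_d=g_{d_1}h_{d_2}$, so $m_1$ is forced to be a monomial of the top part of $g$ and $m_2$ a monomial of the top part of $h$; in particular $\deg g=d_1$ and $\deg h=d_2$ are determined by $m$, up to swapping the roles of $g$ and $h$ when $d_1\ne d_2$, which we handle by simply trying both assignments. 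Thus there is no genuine branching over unknown degree splits, and the special case where $f$ (and both factors) have zero constant term is handled directly by Lemma~\ref{lem2}.

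Next, depending on whether $d_1=d_2$ or $d_1>d_2$ (the case $d_1<d_2$ being symmetric, using left in place of right derivatives throughout), I would invoke the appropriate lemmas. When $d_1=d_2$, Lemma~\ref{lem3a} directly produces circuits for $c_{m_1}(g)\,h$ and $c_{m_2}(h)\,g$. When $d_1>d_2$, Lemma~\ref{lem3} recovers a circuit for $c_{m_1}(g)(h+\beta)$, from which Lemma~\ref{hom} yields circuits for each $c_{m_1}(g)h_i$ and the scalar $c_{m_1}(g)\beta$; Lemma~\ref{lem4a} then recovers circuits for the remaining homogeneous parts $c_{m_2}(h)g_i$ for $1\le i\le d_1$, and summing these gives circuits for $c_{m_2}(h)\,g$ and $c_{m_1}(g)\,h$. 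At this point we possess circuits for $\gamma_2 g$ and $\gamma_1 h$, where $\gamma_1=c_{m_1}(g)$, $\gamma_2=c_{m_2}(h)$, and $\gamma_1\gamma_2=c_m(f)$ is already known.

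It remains to pin down the constant terms. Following the argument after Lemma~\ref{lem4a}, I would extract the coefficient of $m_1$ from the circuits for $\gamma_1\gamma_2\,gh$, $\gamma_2 g$, and $\gamma_1 h$ (possible deterministically in polynomial time by \cite{AMS10}), and, using $f_0=\alpha\beta$ and the known coefficient $\gamma$ of $m_1$ in $f$, derive a quadratic equation $c\xi^2+(a-\gamma\delta)\xi+f_0 b\delta=0$ in the single unknown $\xi=\alpha\gamma_2$, where $\delta=\gamma_1\gamma_2$. Solving it gives at most two candidate pairs $(\alpha,\beta)$; for each candidate we assemble circuits purporting to compute $\gamma_2(g+\alpha)$ and $\gamma_1(h+\beta)$, rescale appropriately, and invoke the PIT algorithm of Theorem~\ref{thm-pit} to test whether their product equals $\gamma_1\gamma_2 f$. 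If a candidate passes we output the corresponding factorization; if none does, we correctly report that $f$ is irreducible. Every step is a polynomial-time circuit manipulation except the root-finding for the quadratic: over $\Q$ this is standard and deterministic polynomial time, while over $\F_q$ with $q=p^r$ we use the deterministic $\poly(p,r)$ root-finding of \cite{GS92} (randomized $\poly(\log p,r)$ via \cite{Ber71}), which is the sole source of the extra $p$ in the stated running time. The main obstacle — and the reason the preceding lemmas are needed — is exactly this reconstruction of the constant terms: the top-degree monomial and partial-derivative machinery recover $g$ and $h$ only up to their constant parts, so one genuinely needs the quadratic trick, plus the final PIT check both to disambiguate its two roots and to certify irreducibility when no valid split exists.
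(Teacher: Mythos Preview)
Your proposal is correct and follows essentially the same route as the paper: extract a top-degree monomial $m=(m_1\,m_2)$ via the PIT algorithm, use Lemmas~\ref{lem2}, \ref{lem3a}, \ref{lem3}, \ref{lem4a} (according to whether $d_1=d_2$ or $d_1>d_2$, with the symmetric case handled analogously) to obtain circuits for $\gamma_2 g$ and $\gamma_1 h$, then recover the constant terms by the quadratic in $\xi=\alpha\gamma_2$ and verify candidates with PIT. The only minor point you gloss over is the degenerate case $\alpha=0$ (or $\beta=0$), which the paper also handles only by saying ``the other case is similar''; this does not affect the correctness or structure of your argument.
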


Finally, we state the main result of this paper.

\begin{theorem}\label{thm-fact2}
  Let $f\in\F\{X\}$ be a degree $d$ polynomial given by a circuit of
  size $s$. Then if $\F=\Q$, in deterministic $\poly(s,n,d)$ time we
  can output the circuits for the irreducible factors of $f$. If $\F$
  is a finite field such that $char(\F)=p$, we obtain a deterministic
  $\poly(s,n,d,p)$ time algorithm for computing circuits for the
  irreducible factors of $f$.
\end{theorem}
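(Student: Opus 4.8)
The plan is to build the full factorization by recursively invoking the single-step factorization algorithm of Theorem~\ref{thm-fact1}. Since $\F\{X\}$ is a UFD (Proposition~\ref{ufd}) and, being an integral domain, satisfies $\deg(pq)=\deg p+\deg q$ for nonzero $p,q$, the irreducible factorization of $f$ corresponds to a binary \emph{factorization tree} $T$ whose root is $f$, whose internal nodes are the intermediate nontrivial products produced by the splits, and whose leaves are the irreducible factors; the leaves have degree at least one and their degrees sum to $d$, so $T$ has at most $d$ leaves and at most $2d-1$ nodes. Thus the recursion makes at most $2d-1$ calls to Theorem~\ref{thm-fact1}: at a node holding a polynomial $p$ we run that algorithm on a circuit for $p$; it either certifies $p$ irreducible (a leaf, whose circuit we output) or returns a split $p=(g+\alpha)(h+\beta)$ with the constants $\alpha,\beta$ and a certificate monomial $m=(m_1m_2)$, and we recurse on $g+\alpha$ and $h+\beta$.

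The only real difficulty is keeping all circuit sizes polynomial. A single call of Theorem~\ref{thm-fact1} on a circuit of size $s'$ outputs factor-circuits of size $\poly(s',n,d)$, and a naive recursion that fed these back in would compose this blow-up once per level of $T$ --- up to $d$ times along a root-to-leaf path --- which is already superpolynomial. To avoid this I would never recurse on a previously produced circuit, but instead process $T$ top-down while maintaining, for each node $u$ encountered, a \emph{context} consisting of two strings $\ell_u,r_u$ over $X\cup\{(,)\}$ of total length $O(d)$ that records where the sub-polynomial $p_u$ sits inside the bracketed polynomial $\tilde f$ of $\tilde C$ (Observation~\ref{obs1}): up to a fixed nonzero scalar (a product of certificate-monomial coefficients of the leaves outside $u$'s subtree), the monomials of $\tilde f$ with prefix $\ell_u$ and complementary suffix $r_u$ give exactly the monomials of $\tilde p_u$. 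Hence one left partial derivative of $\tilde C$ with respect to $\ell_u$ followed by one right partial derivative with respect to $r_u$ (Lemma~\ref{derivative}), together with the outer-bracket bookkeeping used in the proof of Lemma~\ref{lem2} and, when $f$ is inhomogeneous, a homogeneous-part extraction (Lemma~\ref{hom}), yields from $C$ alone a circuit $D_u$ of size $\poly(s,n,d)$ for a nonzero scalar multiple of $p_u$. The point is that $D_u$ is obtained from $C$ by a \emph{constant} number of operations, each inflating size by only a $\poly(n,d)$ factor, rather than by a chain of derivatives of length $\mathrm{depth}(u)$ --- so sizes never compound. We then run Theorem~\ref{thm-fact1} on $D_u$ (cost $\poly(s,n,d)$); if $p_u$ splits, the output supplies the two children together with their certificate monomials, from which $\ell_{u'},r_{u'}$ for each child $u'$ are obtained by prepending and appending the appropriate bracket symbols and fixed monomial strings.

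Since $T$ has $O(d)$ nodes and the work at each node is $\poly(s,n,d)$, the procedure runs in $\poly(s,n,d)$ time over $\Q$ and produces circuits for scalar multiples of the irreducible factors $g_1,\dots,g_r$; the scalars are identified by reading off coefficients of fixed monomials (deterministic polynomial time, \cite{AMS10}), and a final rescaling makes the bracketed product of the outputs equal $\tilde f$, which is verified by the $\pit$ algorithm of Theorem~\ref{thm-pit}. Over $\F_q$, $q=p^r$, each node additionally solves $O(1)$ quadratic equations, deterministically in time $\poly(p,r)$ by \cite{GS92}; with only $O(d)$ nodes this contributes a single extra $\poly(p)$ factor, giving the stated $\poly(s,n,d,p)$ bound.

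The step I expect to be the crux is making the context mechanism fully rigorous in the presence of nonzero constant terms at internal nodes of $T$: when several of the $g_i$ have nonzero constant terms, the embedding of $p_u$ inside $\tilde f$ is no longer a clean prefix-suffix slot, and one has to extend the constant-elimination bookkeeping of Lemmas~\ref{lem3a}, \ref{lem3} and \ref{lem4a} from a single split to all levels of $T$, recovering at each node the non-constant part of $p_u$ (by the derivative scheme above) together with its constant (by the quadratic-equation argument), all while keeping every manipulated circuit a bounded-depth --- hence polynomial-size --- modification of $C$. The nonassociative structure (unique parse trees, as exploited in Theorem~\ref{thm-pit}) is what makes this possible; and, as in the Remark after Proposition~\ref{ufd}, the tree $T$, and hence the list of factors, is only unique up to the $(g+\alpha)(g+\beta)=(g+\beta)(g+\alpha)$ ambiguity.
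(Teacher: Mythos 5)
Your proposal takes essentially the same route as the paper: recurse via Theorem~\ref{thm-fact1} over the $O(d)$-node factorization tree, and avoid the superpolynomial circuit blow-up by computing each intermediate factor's circuit directly from the original circuit $C$ via suitable left/right partial derivatives (the idea the paper attributes to \cite{AJR15}). In fact your prefix/suffix ``context'' bookkeeping and your flagging of the constant-term subtlety are spelled out in more detail than the paper's own one-paragraph remark, which leaves those points implicit.
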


\begin{remark}
We could apply Theorem~\ref{thm-fact1} repeatedly to find all
irreducible factors of the input $f\in\F\{X\}$. However, the problem
with that approach is that the circuits for $g$ and $h$ we computed in
the proof of Theorem~\ref{thm-fact1}, where $f=gh$ is the
factorization, is larger than the input circuit $C$ for $f$ by a
polynomial factor. Thus, repeated application would incur a
superpolynomial blow-up in circuit size. We can avoid that by
computing the required partial derivative of $g$ as a suitable partial
derivative of the circuit $C$ directly. This will keep the circuits
polynomially bounded. This idea is from \cite{AJR15} where it is used
for homogeneous noncommutative polynomial factorization. Combined
with Theorem~\ref{thm-fact1} this gives the polynomial-time algorithm 
of Theorem~\ref{thm-fact2}.
\end{remark}
 
\section{Conclusion}\label{conclusion}

Motivated by the nonassociative circuit lower bound result shown in
\cite{HWY10b}, we study PIT and polynomial factorization in the free
nonassociative noncommutative ring $\F\{X\}$ and obtain efficient
white-box algorithms for the problems.

Hrubes, Wigderson, and Yehudayoff \cite{HWY10b} have also shown
exponential circuit-size lower bounds for nonassociative, commutative
circuits. It would be interesting to obtain an efficient polynomial
identity testing algorithm for that circuit model too. Even a
randomized polynomial-time algorithm is not known.

Obtaining an efficient \emph{black-box} PIT in the ring $\F\{X\}$ is
also an interesting problem. Of course, for such an algorithm the
black-box can be evaluated on a suitable nonassociative algebra. To
the best of our knowledge, there seems to be no algorithmically useful
analogue of the Amitsur-Levitzki theorem \cite{AL50}.

\end{document}